\newcommand{\comment}[1]{}
\newcommand{\URL}[1]{\textcolor{blue}{\url{#1}}}
\newtheorem{theorem}{Theorem}%[section]
\newtheorem{corollary}[theorem]{Corollary}
\newtheorem{proposition}[theorem]{Proposition}
\theoremstyle{definition}
\newtheorem{definition}{Definition}%[section]
\newclass{\LFew}{LFew}
\newclass{\FewL}{FewL}
\newclass{\FL}{FL}
\newclass{\OptL}{OptL}
\newclass{\FewUL}{FewUL}
\newclass{\promiseUL}{promiseUL}
\newclass{\UOptL}{UOptL}
\newclass{\uspace}{Uspace}
\newclass{\nspace}{Nspace}
\newclass{\amb}{Ambiguity}
\newclass{\ReachFewL}{ReachFewL}
\newclass{\ReachLFew}{ReachLFew}
\newclass{\uncongested}{UnCongested}
\newclass{\coUL}{coUL}
\newclass{\ReachUL}{ReachUL}
\newclass{\LogDCFL}{LogDCFL}
\newlang{\GGR}{GGR}
\newlang{\MGGR}{3D-mGGR}
\newlang{\gB}{Bi}
\newlang{\gBpoly}{fp-Bi}
\newlang{\gGG}{GG}
\newclass{\page}{3Page}
\newlang{\pageR}{3PageReach}
\numberwithin{equation}{section}
\newcommand{\acc}{\mbox{\it acc}}
\newcommand{\rej}{\mbox{\it rej}}
\newcommand{\gap}{\mbox{\it gap}}
\title{\bf\LARGE On the Power of Unambiguity in
  Logspace\footnote{Research supported in part by NSF grants
    CCF-0830730, CCF-0916525, CCF-0830479, CCF-0916797.}}
\author{ {\sc Aduri Pavan}\footnote{Department of Computer Science,
    Iowa State University: {\tt email:pavan@cs.iastate.edu}} \and {\sc
    Raghunath Tewari}\footnote{Department of Computer Science and
    Engineering, University of Nebraska-Lincoln: {\tt
      email:rtewari@cse.unl.edu}} \and {\sc
    N. V. Vinodchandran}\footnote{Department of Computer Science and
    Engineering, University of Nebraska-Lincoln: {\tt
      email:vinod@cse.unl.edu}}}
\begin{document}

\maketitle

%\vspace{1cm}
%\begin{center}
%{\LARGE (Draft Copy: Please do not distribute)}
%\end{center}
\vspace{2cm}
\begin{abstract}
We report progress on the \NL\ vs \UL\ problem. 
\begin{itemize}
\item[-] We show unconditionally that the complexity class $\ReachFewL\subseteq\UL$. This improves on the earlier known upper bound $\ReachFewL \subseteq \FewL$.
\item[-] We investigate the complexity of min-uniqueness - a central
  notion in studying the \NL\ vs \UL\ problem.
\begin{itemize} 
\item We show that min-uniqueness is necessary and sufficient for  showing 
$\NL\ =\UL$. 
\item We revisit the class $\OptL[\log n]$ and show that {\sc ShortestPathLength} - computing the length of the shortest path in a DAG, is complete for $\OptL[\log n]$. 
\item We introduce $\UOptL[\log n]$, an unambiguous version of $\OptL[\log n]$, and show that (a) $\NL =\UL$ if and only if $\OptL[\log n] = \UOptL[\log n]$, (b) $\LogFew \leq \UOptL[\log n] \leq \SPL$.  
\end{itemize}
\item[-] We show that the reachability problem over graphs embedded on 3 pages is complete for \NL. This contrasts with the reachability problem over graphs embedded on 2 pages which is logspace equivalent to the reachability problem in planar graphs and hence is in \UL. 
\end{itemize}
\end{abstract}
\newpage
\section{Introduction}

This paper is centered around the $\NL$ vs $\UL$ problem.  Can
nondeterministic space bounded computations be made unambiguous? This
fundamental question was first raised by Reinhardt and Allender in the
paper entitled ``Making Nondeterminism Unambiguous''
\cite{ReinhardtAllender00}. Reinhardt and Allender showed that in the
non-uniform setting it is indeed possible to simulate any
nondeterministic logspace computation by an unambiguous one (that is,
$\NL/\poly = \UL/\poly$) thus giving the first strong evidence that
this relation might hold in the uniform setting as well.

A nondeterministic machine is unambiguous if it has at most one
accepting path on any input~\cite{Valiant76}. \UL\ is the class of
decision problems that are decided by unambiguous logspace bounded
nondeterministic machines. Clearly $\UL$ is the natural logspace
analog of $\UP$~\cite{Valiant76}, the unambiguous version of
$\NP$. Historically, several researchers have investigated this class
(for example, \cite{BHS93, AJ93, BJLR91, BDHM92}) in different
contexts. But Buntrock et al. ~\cite{BJLR91} are the first to conduct
a focused study of the complexity class $\UL$ and its variations. 

% Since then $\UL$ and related unambiguous classes %have been of interest to
%researchers~\cite{BJLR91,AJ93,Lange97,AllenderLange98,ReinhardtAllender00,AllenderEtAl05,BTV2009, TW09}.

Since the above-mentioned paper due to Reinhardt and Allender, there
has been significant progress reported on the $\NL$ vs $\UL$ problem.
In \cite{AllenderMatching99}, Allender, Reinhard, and Zou showed that,
under the (very plausible) hardness assumption that deterministic
linear space has functions that can not be computed by circuits of
size $2^{\epsilon n}$, the constructions given by Reinhardt and
Allender can be {\em derandomized} to show that
$\NL=\UL$~\cite{AllenderMatching99}.  As the reachability problem for
directed graphs is complete for $\NL$, it is natural to investigate
the space complexity of reachability for subclasses of directed graphs
and indeed the recent progress has been in this direction. In
\cite{BTV09}, it is shown that reachability for directed planar graphs
is in $\UL$. Subsequently, Thierauf and Wagner showed that
reachability for $K_{3,3}$-free and $K_{5}$-free graphs can be reduced
to planar reachability in logspace~\cite{TW09}. Kyn\v{c}l and
Vysko\v{c}il showed that reachability for bounded genus graphs also
reduces to the planar case \cite{KV09}. Thus reachability for these
classes of graphs is also in $\UL$.  

These results provide significant evidence that $\NL$ equals $\UL$ and
establishing this fundamental equivalence may be within the reach of
current techniques.

\subsection*{Our Results}

\subsubsection*{Complexity of \ReachFewL}

$\FewL$, the logspace analog of the polynomial time class
$\FewP$~\cite{Allender86,CaiHemachandra90}, is the class of languages
that are decided by nondeterministic logpsace machines with the
promise that on any input there are at most polynomially many
accepting paths~\cite{BJLR91,BDHM92}.  Is $\FewL = \UL$? As $\FewL
\subseteq \NL$, this is a very interesting restriction of $\NL = \UL$
question (it is known that $\FewL$ is in $\L^{\promiseUL}$
\cite{Allender06}). While we are unable to show that $\FewL\subseteq
\UL$ , as our first result we show that the class $\ReachFewL
\subseteq \UL$.

\vspace{2mm}
\noindent {\em Result 1}. $\ReachFewL\subseteq \UL\cap\coUL$.
\vspace{1mm}

\ReachFewL\ is a restriction of $\FewL$~\cite{BJLR91}.  We call a
nondeterministic machine $M$ a {\em reach-few} machine, if for any
input $x$ and any configuration $c$ of $M(x)$, the number of paths
from the start configuration to $c$, is bounded by a
polynomial. $\ReachFewL$ is the class of languages decided by a
reach-few machine that is logspace bounded. Notice that for a machine
accepting a $\FewL$ language there can be (useless) configurations
which does not lead to any accepting configuration but still with
exponentially many paths from the start configuration to them. For a
reach-few machine, the number of paths from the start configuration to
{\em any} configuration is bounded by a polynomial. It is worth noting
that such distinctions are not meaningful in the polynomial time
setting as there is enough space to store the entire computation path
during a nondeterministic computation. Our result improves on the
previous known trivial upper bound of $\ReachFewL\subseteq \FewL$.

The class $\ReachFewL$ was also investigated by Buntrock, Hemachandra,
and Siefkes \cite{BHS93} under the notation \nspace-\amb$(\log
n,n^{O(1)})$. In \cite{BHS93}, the authors define, for a space bound $s$
and an unambiguity parameter $a$, the class
\nspace-\amb$(s(n),a(n))$ as the class of languages accepted by
$s(n)$ space bounded nondeterministic machines for which the number of
paths from the start configuration to any configuration is at most
$a(n)$. They show that \nspace-\amb$(s(n),a(n))\subseteq \uspace(s(n)\log a(n))$ (hence \nspace-\amb$(\log n,
O(1))\subseteq \UL$). Our method can be used to show that
\nspace-\amb$(s(n),a(n))\subseteq \uspace(s(n)+\log a(n))$,
thus substantially improving their upper bound.

We extend our first result to show that in fact we can count the
number of accepting paths of a $\ReachFewL$ computation using an
oracle in $\UL\cap\coUL$ and this implies that $\ReachLFew\subseteq
\UL\cap\coUL$ (\ReachLFew\ is similar to the class
\Few\ \cite{CaiHemachandra90} in the polynomial-time setting).

\subsubsection*{Complexity of Min-uniqueness}
Our second consideration is the notion of {\em min-uniqueness} which
is a central notion in the study of unambiguity in the logspace
setting. Min-uniqueness was first used by Wigderson to show that
$\NL\subseteq \oplus\L$ non-uniformly \cite{Wigderson94}. For a
directed graph $G$ and two nodes $s$ and $t$, $G$ is called $st$-{\em
  min-unique} if the minimum length $s$ to $t$ path is unique (if it
exists). $G$ is min-unique with respect to $s$, if it is
$sv$-min-unique for all vertices $v$. While $st$-min-uniqueness was
sufficient for Wigderson's result, Reinhardt and Allender used the
stronger version of min-uniqueness to show that $\NL\subseteq
\UL/\poly$. In particular, they essentially showed that a logspace
algorithm that transforms a directed graph into a min-unique graph
with respect to the start vertex can be used to design an unambiguous
algorithm for reachability.  This technique was subsequently used in
~\cite{BTV09} to show that reachability for planar directed graphs is
in $\UL$. These results strongly indicate that understanding
min-uniqueness is crucial to resolving the $\NL$ vs $\UL$ problem.

Our second set of results is aimed at understanding min-uniqueness
from a complexity-theoretic point of view. First we observe that
min-uniqueness is necessary to show that $\NL=\UL$: if $\NL=\UL$, then
there is a $\UL$ algorithm that makes any directed graph min-unique
with respect to the start vertex. It is an easy observation that
Reinhardt and Allender's technique will work even if the algorithm
that makes a directed graph min-unique is only $\UL$ computable. Thus
min-uniqueness is necessary and sufficient for showing $\NL =\UL$.

\vspace{3mm}
\noindent {\em Result 2}: $\NL=\UL$ if and only if there is a
polynomially-bounded $\UL$-computable weight function $f$ so that for
any directed acyclic graphs $G$, $f(G)$ is min-unique with respect to
$s$.
\vspace{2mm}

Graph reachability problems and logspace computations are
fundamentally related. While, reachability in directed graphs
characterizes \NL, Reingold's break-through results implies that
reachability in undirected graphs captures \L\ \cite{Reingold08}.  We
ask the following question. Can we investigate the notion of
min-uniqueness in the context of complexity classes? We introduce a
logspace function class $\UOptL[\log n]$ towards this goal.

\OptL\ is the function class defined by \`{A}lvarez and Jenner (in
\cite{AJ93}) as the logpsace analog of Krentel's \OptP\
\cite{Krentel88}. \OptL\ is the class of functions whose values are
the maximum over all the outputs of an \NL-transducer. \`{A}lvarez and
Jenner showed that this class captures the complexity of some natural
optimization problems in the logspace setting (eg. computing the
lexicographically maximum path of length $\leq n$ from $s$ to $t$ in a
directed graph).

We consider $\OptL[\log n]$, the restriction of $\OptL$ where the
function values are bounded by a polynomial. \`{A}lvarez and Jenner
considered this restriction and showed that $\OptL[\log n] =
\FL^{\NL}[\log n]$. However,  previously there were no completeness
results known for this class.  We show the first completeness result
for $\OptL[\log n]$.  Consider the problem: Given $G$ and two nodes
$s$ and $t$. Compute the length of the shortest path from $s$ to $t$
(denoted by {\sc ShortestPathLength}).  We show that {\sc
  ShortestPathLength} is complete for the class $\OptL[\log n]$ (under
metric reductions).

\vspace{3mm}
\noindent{\em Result 3}. {\sc ShortestPathLength} is complete for $\OptL[\log n] = \FL^{\NL}[\log n].$ 
\vspace{2mm}

Motivated by this completeness result, we define a new unambiguous
function class $\UOptL[\log n]$ (unambiguous $\OptL$: the minimum is
output on a unique computation path).  We show that $\NL = \UL$ is
equivalent to to the question whether $\OptL[\log n]=\UOptL[\log n]$.

\vspace{3mm}
\noindent{\em Result 4}.  $\NL = \UL$ if and only if $\OptL[\log
  n]=\UOptL[\log n]$.
\vspace{2mm}

\SPL, the `gap' version of \UL, is an interesting logspace class first
studied in \cite{AllenderMatching99}. The authors showed that the
`matching problem' is contained in a non-uniform version of \SPL. They
also show that \SPL\ is powerful enough to contain \FewL.  We show
that $\UOptL[\log n] \subseteq \FL^{\SPL}[\log n]$. Thus any language that is
reducible to $\UOptL[\log n]$ is in the complexity class $\SPL$. This
contrasts with the equivalence $\OptL[\log n] = \FL^{\NL}[\log n]$.
We also show that the class $\LogFew$ reduces to $\UOptL[\log n]$ (refer
to the next section for the definition of \LogFew).

\vspace{3mm}
\noindent{\em Result 5}.  $\LogFew \leq \UOptL[\log n] \subseteq \FL^{\SPL}[\log n]$.
\vspace{2mm}

Figures 1 and 2 depict the relations among various unambiguous and
`few' classes known before and new relations that we establish in this
paper, respectively. Definitions of these complexity classes are given
in subsequent sections.

\subsubsection*{Three pages are sufficient for \NL}

Finally we consider the reachability problem for directed graphs
embedded on 3 pages and show that it is complete for \NL. This is in
contrast with reachability for graphs on 2 pages which is logspace
equivalent to reachability in grid graphs and hence is in $\UL$ by the
result of \cite{BTV09}. Thus in order to show that $\NL = \UL$, it is
sufficient to extend the results of \cite{BTV09} to graphs on 3
pages. It is also interesting to note that reachability for graphs on
1 page is equivalent to reachability in trees and is complete for
$\L$.

\vspace{3mm}
\noindent{\em Result 6}.  Reachability in directed graphs embedded on 3 pages is complete for $\NL$.
\vspace{2mm}

We use a combination of existing techniques for proving our results.
\comment{ In particular, techniques that we use can be found in
  ~\cite{FKS84, AllenderMatching99, BJLR91, BDHM92, AJ93} or else
  were.}

\begin{figure}[t]
\begin{minipage}[b]{0.5\linewidth} % A minipage that covers half the page
\centering
\includegraphics[scale=0.75]{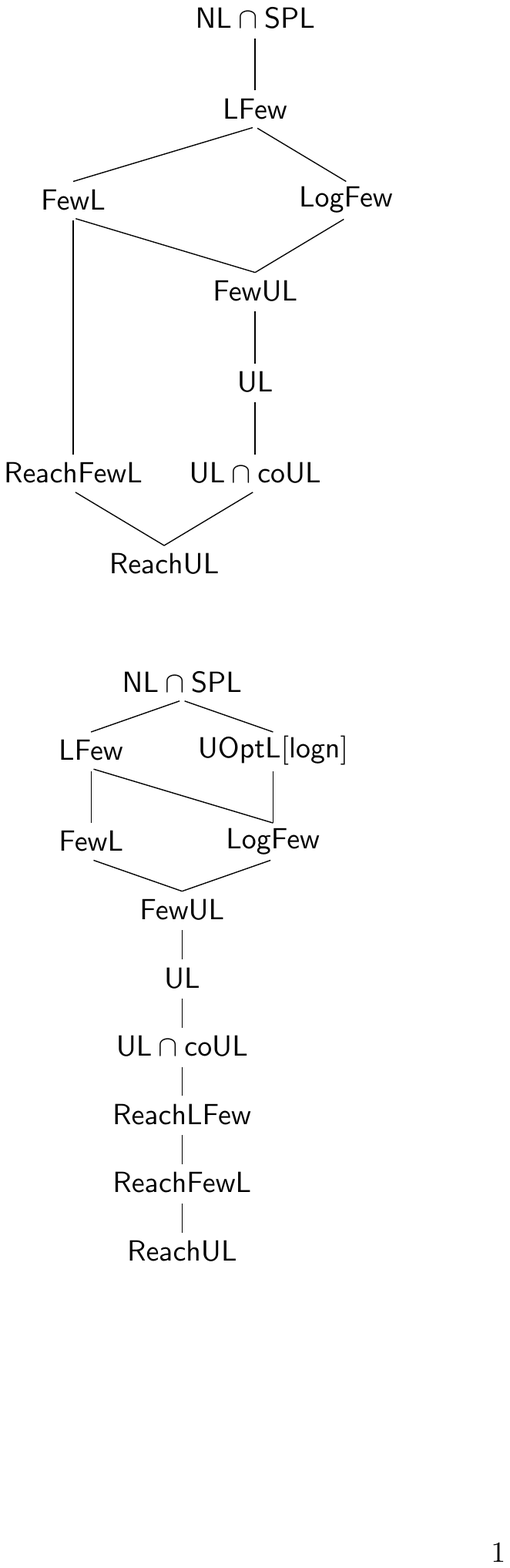}
\caption{Relations known before.}
\label{fig:old-relations}
\end{minipage}
\hspace{0.5cm} % To get a little bit of space between the figures
\begin{minipage}[b]{0.5\linewidth}
\centering
\includegraphics[scale=0.75]{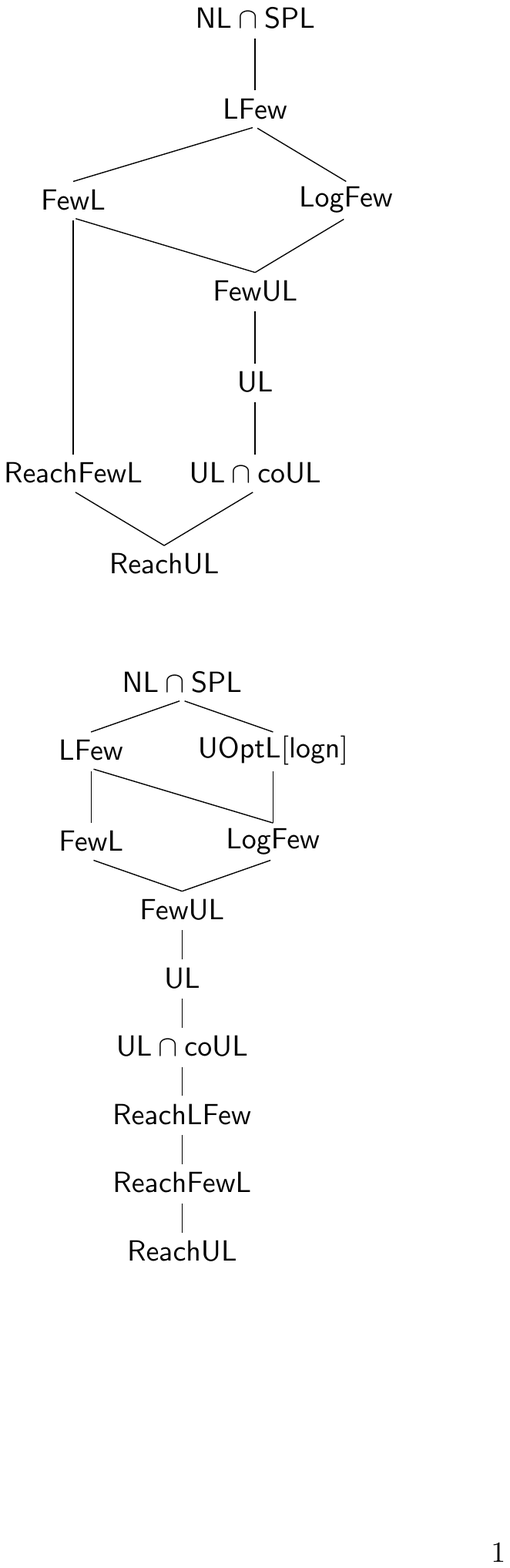}
\caption{New relations.}
\label{fig:new-relations}
\end{minipage}
\end{figure}

%\begin{center}
%\includegraphics[scale=0.7]{FewClasses.pdf}
%\end{center}

%\section{Preliminaries}

\section{Logspace Complexity Classes}

We assume familiarity with the basics of complexity theory and
in particular the log-space bounded complexity class $\NL$. 
It is well known that checking for
$st$-connectivity for general directed graphs is
$\NL$-complete. We call a nondeterministic logspace machine an \NL\ machine. For an $\NL$ machine $M$, let $\acc_{M}(x)$ and $\rej_{M}(x)$ denote the number of accepting computations and the number of rejecting computations respectively. Denote $\gap_{M}(x) = \acc_{M}(x)-\rej_{M}(x)$.  

We are interested in various restrictions of $\NL$ machines with few accepting paths. In the literature (eg \cite{BJLR91,BDHM92,AJ93,AllenderMatching99}) various versions of unambiguity and fewness have been studied. We first define them all here. 

\begin{definition} ({\bf Unambiguous machines})
A nondeterministic logspace machine $M$ is 
\begin{itemize}
\vspace{-2mm}
\item[-]{\em reach-unambiguous} if for any input and for any
  configuration $c$, there is at most one path from the start
  configuration to $c$. (The prefix `reach' in the term indicates that
  the property should hold for all configurations reachable from the
  start configuration).
\vspace{-2mm}
\item[-] {\em unambiguous} if for any input there is at most one accepting path. 
\vspace{-2mm}
\item[-] {\em weakly unambiguous} if for any accepting configuration $c$ there is at most one path from the start configuration to $c$. 
\end{itemize}
\end{definition}
\begin{definition} ({\bf Unambiguous classes})
\begin{enumerate}
\item[-] $\ReachUL$ - class of languages that are decided by reach-unambiguous machines with at most one accepting path on any input. 
\item[-] $\UL$ - class of languages that are decided by unambiguous machines. 
\item[-] $\FewUL$ - 
class of languages that are decided by weakly unambiguous machines. 
\item[-] $\LogFew$ - class of languages $L$ for which there exists a weakly unambiguous machine $M$ and a logspace computable predicate $R$ such that $x\in L$ if and only if $R(x,\acc_{M}(x))$ is true. 
\end{enumerate}
\end{definition}

We could define a `reach' version of $\FewUL$.  
But that coincides with 
$\ReachUL$ as shown in \cite{BJLR91}. The following containments are easy: $\ReachUL
\subseteq \UL\subseteq \FewUL\subseteq \LogFew$. It is also known that 
$\FewUL$ is $\L_{d}(\UL)$ (logspace disjunctive truth-table closure of 
$\UL)$ \cite{BJLR91}. 
  
By relaxing the unambiguity condition to a polynomial bound on the number of paths, we  get analogous `few' classes.  

\begin{definition} ({\bf Few machines})
A nondeterministic logspace machine $M$ is a 
\begin{itemize}
\vspace{-2mm}
\item[-]{\em reach-few} machine if there is a polynomial $p$ so that
  for any input $x$ and for any configuration $c$, there are at most
  $p(|x|)$ paths from the start configuration to $c$.
\vspace{-2mm}
\item[-]{\em few} machine if there is a polynomial $p$ so that for any
  input $x$ there are at most $p(|x|)$ accepting path.
\end{itemize}
\end{definition}

\begin{definition} ({\bf Few classes})
\begin{enumerate}
\item[-] $\ReachFewL$ - class of languages that are decided by reach-few machines. 
\item[-] $\ReachLFew$ - class of languages $L$ for which there exists a reach-few machine $M$ and a logspace computable predicate $R$ such that $x\in L$ if and only if $R(x,\acc_{M}(x))$ is true. 
\item[-] $\FewL$ - class of languages that are decided by few-machines. 
\item[-] $\LFew$ - class of languages $L$ for which there exists a few machine $M$ and a logspace computable predicate $R$ such that $x\in L$ if and only if $R(x,\acc_{M}(x))$ is true. 
\end{enumerate}
\end{definition}

As mentioned in the introduction, $\ReachFewL$ is the same class as $\nspace-\amb(\log n, n^{O(1)})$ defined in \cite{BHS93}. In \cite{BJLR91}, the authors observe that $\ReachFewL \subseteq \LogDCFL$. This is  because a depth first search of a reach-few machine can be implemented in \LogDCFL. 

The following containments follow from the definitions: $\ReachFewL\subseteq \FewL\subseteq \LFew$. It is also clear that all the above-defined classes are contained in $\LFew$ and it is shown in $\cite{AllenderMatching99}$ that $\LFew\subseteq \NL$. Thus all these classes are contained in $\NL$. Finally, we also consider the class \SPL\ - the `gap' version of $\UL$. A language $L$ is in $\SPL$ if there exists an \NL-machine $M$ so that for all inputs $x$, $\gap_{M}(x)\in \{0,1\}$ and $x\in L$ if and only if $\gap_{M}(x) = 1$. \SPL\ is contained in $\oplus\L$ (in fact all `mod' classes) and it is big enough to contain \LFew \cite{AllenderMatching99}. A nonuniform version of \SPL\ contains the matching problem \cite{AllenderMatching99}. 

We will use {\em metric reductions} for functional reducibility.  A
function $f$ is logspace metric reducible to function $g$, if there
are logsapce computable functions $h_1$ and $h_2$ so that $f(x) =
h_1(x,g(h_2(x)))$.

\section{$\ReachFewL \subseteq \UL\cap\coUL$}

%\subsection{Min-uniqueness}

We will use the technique of Reinhardt and Allender to show the upper bound. We will state their theorem in a suitable form.  But first we repeat the definition of min-uniqueness.  

\begin{definition}
Let $G = (V,E)$ be a directed graph. For a pair of vertices $s$ and $t$ we say $G$ is $st$-{\em min-unique} if there is a path from $s$ to $t$ in $G$, then the minimum length path from $s$ to $t$ is unique. $G$ is called {\em min-unique} with respect to vertex $s$, if for
all vertices $v$, $G$ is $sv$-min-unique. 
$G$ is called {\em min-unique}
if it is min-unique
with respect to all the nodes. 
\end{definition}

The following theorem from \cite{ReinhardtAllender00} states that 
the reachability problem can be solved unambiguously for classes of graphs that are min-unique with respect to the start vertex. Moreover, we can also check whether a graph is min-unique unambiguously. 
\newpage
\begin{theorem}[\cite{ReinhardtAllender00}]\label{RA}
There is an unambiguous nondeterministic logspace machine $M$ that on input a directed graph $G$ and two vertices $s$ and $t$ such that 
\begin{enumerate}
\item If $G$ is not min-unique with respect to $s$, then $M$ outputs `not min-unique' on a unique path.
\item If $G$ is min-unique with respect to $s$, then $M$ accepts on a
  unique path if there is a directed path
  from $s$ to $t$, and rejects on a unique path if there are no paths from $s$ to $t$.
 \end{enumerate}
\end{theorem}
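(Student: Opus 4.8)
The plan is to follow the inductive-counting strategy of Immerman and Szelepcs\'enyi, but to carry along enough side information that, on a min-unique graph, every nondeterministic choice that does not immediately lead to rejection is in fact forced. Fix $G=(V,E)$ with $n$ vertices, let $d(v)$ be the length of the shortest $s$-to-$v$ path ($d(v)=\infty$ if $v$ is unreachable), and for $k\ge 0$ set $c_k=|\{v:d(v)\le k\}|$ and $\Sigma_k=\sum_{v:d(v)\le k}d(v)$. The machine computes the pairs $(c_k,\Sigma_k)$ for $k=0,1,2,\dots$ in order, starting from $(c_0,\Sigma_0)=(1,0)$ and stopping at the first $k$ with $c_k=c_{k-1}$; at that point $c_k$ is the number of vertices reachable from $s$, and $t$ is reachable iff it was certified reachable at some earlier stage. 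Note that $\Sigma_k=\Sigma_{k-1}+k(c_k-c_{k-1})$, so each new pair is determined by the previous one together with $c_k$.

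The heart of the construction is a logspace subroutine that, given $k$ and the correct values $(c_{k-1},\Sigma_{k-1})$, works \emph{unambiguously} and (i) decides, for a given vertex $v$, whether $d(v)\le k$, and (ii) when $d(v)=k$, returns the number of in-neighbours $u$ of $v$ with $d(u)=k-1$. It scans all $u\in V$; for each $u$ it guesses whether $d(u)\le k-1$, and if so guesses a length $\ell_u\in\{0,\dots,k-1\}$ together with an $s$-to-$u$ path of length exactly $\ell_u$, verifying the path edge by edge and killing the branch on any failure; it maintains a running count and a running sum of the $\ell_u$'s and, at the end of the scan, kills the branch unless the count equals $c_{k-1}$ and the sum equals $\Sigma_{k-1}$. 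The key point is that on any surviving branch $d(u)\le\ell_u\le k-1$, so the guessed set is contained in $\{u:d(u)\le k-1\}$, hence equal to it since both have size $c_{k-1}$; and since $\sum\ell_u=\Sigma_{k-1}=\sum d(u)$ with $\ell_u\ge d(u)$ termwise, in fact $\ell_u=d(u)$ for every $u$. Min-uniqueness with respect to $s$ now makes the guessed shortest path to each reachable $u$ unique, so the surviving branch is unique, and one such branch exists. Along it the subroutine reads off whether $v$ has an in-neighbour of distance $\le k-1$ (equivalently $d(v)\le k$) and counts the in-neighbours $u$ of $v$ with $d(u)=k-1$.

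I would then assemble the full machine as follows. At stage $k$ it runs an outer loop over $v\in V$; for each $v$ it invokes the subroutine (re-guessing the set $\{u:d(u)\le k-1\}$ and its shortest paths each time, which is harmless since those guesses are forced on the surviving path), thereby updating $c_k$ and, whenever $d(v)=k$, obtaining the number $p(v)$ of distance-$(k-1)$ in-neighbours of $v$. If some $p(v)\ge 2$, the machine outputs `not min-unique' and halts. By induction, any branch that reaches stage $k$ without having output `not min-unique' has seen $G$ behave min-uniquely up to distance $k-1$, so stages $1,\dots,k-1$ and the scan within stage $k$ up to the first offending $v$ are unambiguous, whence `not min-unique' is emitted on a single path. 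If no $p(v)\ge 2$ ever occurs and $t$ is certified with $d(t)\le k$ before $c$ stabilises, the machine accepts on its unique surviving path; if $c$ stabilises with $t$ never certified, it rejects on its unique surviving path. Since $G$ is min-unique with respect to $s$ exactly when every reachable $v\ne s$ has a single distance-decreasing in-neighbour, this gives the three promised behaviours. A space audit finishes the argument: the machine stores only $k$, the loop pointers $v$ and $u$, the counters $c_\bullet$ and $\Sigma_\bullet$ (values $O(n^2)$), the current vertex of the path being guessed, and a handful of small counters, all $O(\log n)$ bits, so $M$ is an unambiguous $\NL$ machine as required.

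The main obstacle, and where I would take the most care, is the unambiguity bookkeeping for a \emph{possibly} non-min-unique input: one must ensure that a min-uniqueness violation is detected strictly before any genuinely ambiguous guess is made, so that the `not min-unique' output itself lies on a unique path. This is precisely what the ``min-unique up to distance $k-1$'' induction buys, together with the observation that the tests count$\,=c_{k-1}$ and sum$\,=\Sigma_{k-1}$ pin down the set $\{u:d(u)\le k-1\}$ and all the $\ell_u$ \emph{before} min-uniqueness is ever invoked; min-uniqueness is used only to collapse the remaining freedom in the choice of the shortest paths themselves, and its failure is caught by the predecessor count.
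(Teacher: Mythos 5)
Your proposal is correct and is essentially a faithful reconstruction of the double inductive-counting argument of Reinhardt and Allender, which is exactly the proof the paper imports by citation (it gives no proof of its own for this theorem). The key ingredients — carrying both $c_k$ and the distance-sum $\Sigma_k$ so that the count and sum checks force the guessed set and the guessed distances, using min-uniqueness only to collapse the choice of paths, and detecting a violation via a second distance-decreasing in-neighbour before any ambiguous guess can occur — match the original argument.
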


We can also define the notion of min-uniqueness for weighted
graphs. But this is equivalent to the above definition for our
purposes if the weights are positive and polynomially bounded as we
can replace an edge with weight $k$ with a path of length $k$. In fact
we will some times use this definition for weighted graphs without
explicitly mentioning it.  Thus for showing that $\NL=\UL$ it is
sufficient to come up with a positive and polynomially bounded weight
function that is \UL-computable and makes a directed graph min-unique
with respect to the start vertex.

\comment{
\subsection{Complexity of \ReachFewL}

\begin{definition}
$\ReachFewL$  is the class of languages for which there is a
nondeterministic logspace machine which satisfies for any input $x$ and
any configuration $c$ that there are at most polynomially many paths
from start configuration to $c$.
\end{definition}

\begin{definition}
$\ReachLFew$ is the class of languages $L$ for which there exists an $\NL$ machine $M$ and a logspace computable predicate $R$ so that (a) for any input $x$ and any configuration $c$ of $G_x$, the number of paths from the start configuration $s$ to $c$ is bounded by a polynomial (b) $x\in L$ if and only if $R(x,\#{\it acc}_M(x))$ is true.  
\end{definition}

Clearly $\ReachFewL\subseteq \ReachLFew$. 
We will show that $\ReachLFew\subseteq \UL\cap\coUL$. For that we first need to show that $\ReachFewL\subseteq \UL\cap\coUL$.  
}

\begin{theorem}
$\ReachFewL \subseteq \UL\cap\coUL$
\end{theorem}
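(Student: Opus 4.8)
The plan is to reduce reachability in a reach-few machine to reachability in a min-unique graph computable in logspace, and then invoke Theorem~\ref{RA}. Let $M$ be a reach-few machine with polynomial bound $p$ on the number of start-to-configuration paths, and let $G_x$ be its configuration graph on input $x$, with start configuration $s$ and (w.l.o.g.\ unique) accepting configuration $t$. The key observation is that for a reach-few machine the quantity $N(c)$, the number of paths from $s$ to a configuration $c$, is polynomially bounded, hence computable with $O(\log n)$ bits; moreover along any edge $(c,c')$ one has a simple relation between $N(c)$ and the contributions into $N(c')$. I would use $N(\cdot)$ to drive a weighting scheme.

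First I would recall the standard Reinhardt--Allender weight-by-(2^i) or the ``hashing'' trick, but here a cleaner route is available: because $G_x$ is a DAG (configurations are time-stamped, or we add a step counter) and is reach-few, we can define a weight function $w$ on edges so that distinct $s$-to-$v$ paths get distinct weights. Concretely, order the in-edges of each vertex and give the $j$-th in-edge of $v$ a weight that shifts path-counts into disjoint ``digit slots'': if we process vertices in topological order and maintain that the set of path-lengths (under $w$) from $s$ to $v$ is a set of distinct integers in $[0, N(v))$ realized bijectively, then assigning the $j$-th in-edge from $u$ to $v$ a weight offset that places the $N(u)$ incoming values into the $j$-th block of size $N(u)$ keeps the invariant. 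Since all the $N(\cdot)$ are polynomially bounded and logspace-computable (count paths in a reach-few DAG by a logspace procedure — this is where the reach-few promise is essential, since path-counting is $\#L$-hard in general but here the count is small and can be accumulated), the total weight stays polynomially bounded and the whole weight function is logspace computable. This makes $G_x$ (with these edge weights, expanded to unit edges) min-unique with respect to $s$: any two distinct $s$-$v$ paths have different weights, so in particular the minimum-weight one is unique.

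Then I would feed the weighted graph to the machine $M$ of Theorem~\ref{RA}. Since the graph is genuinely min-unique with respect to $s$, case~(1) of that theorem never fires, and case~(2) gives an unambiguous machine that accepts iff $t$ is reachable from $s$, i.e.\ iff $x \in L$, and rejects on a unique path otherwise. This simultaneously places $L$ in $\UL$ and, because the ``rejects on a unique path'' branch is itself an unambiguous computation certifying non-membership, in $\coUL$; hence $\ReachFewL \subseteq \UL \cap \coUL$.

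The main obstacle I anticipate is the logspace computation of the path-counts $N(c)$ and, more delicately, arguing that the inductively defined edge weights can be produced by a single logspace transducer rather than merely being ``logspace-bounded in size.'' Naively the invariant is maintained by a topological-order sweep that remembers, per vertex, the block structure — too much to store. The fix is that $N(v)$ alone determines the weight increments (the $j$-th in-edge gets offset $\sum_{i<j} N(u_i)$ where $u_i$ ranges over in-neighbors in fixed order), and each $N(u_i)$ is separately logspace-computable on demand by re-running the counting procedure; so the transducer recomputes what it needs instead of storing it. Verifying that this recomputation stays within logspace, and that the reach-few bound really does keep every intermediate count polynomial (so that no ``overflow'' across digit slots occurs), is the crux; once that is in hand the reduction to Theorem~\ref{RA} is immediate. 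An alternative, if the direct counting argument is fussy, is to first show $\ReachFewL$ is closed under the relevant operations and bootstrap the weighting via $\ReachUL$, but I expect the direct construction to be cleanest.
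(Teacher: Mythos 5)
There is a genuine gap, and it sits exactly where you flagged ``the crux.'' Your construction requires the path counts $N(c)$ (the number of paths from the start configuration $s$ to a configuration $c$) to be computable by a deterministic logspace procedure, invoked on demand while building the weight function. No such procedure is known, and the reach-few promise does not supply one: the promise only guarantees that the \emph{value} of $N(c)$ is polynomially bounded, not that it can be accumulated in logspace. Indeed, merely deciding whether $N(c)>0$ is reachability in the configuration graph of a reach-few machine, which is precisely the problem the theorem is about; a deterministic logspace algorithm for $N(\cdot)$ would immediately give $\ReachFewL\subseteq\L$, a far stronger statement than $\ReachFewL\subseteq\UL\cap\coUL$ and well beyond what is known (the paper itself only places the counting function $\acc_{M}(x)$ in $\FL^{\UL\cap\coUL}$, not in $\FL$). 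So the ``cleaner route'' is circular, and the digit-slot weighting cannot be realized by a logspace transducer.

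The route you set aside at the start is the one the paper actually takes. Assign weight $2^i$ to the $i$-th edge in lexicographic order, so that all paths from $s$ get distinct (but exponentially large) weights, and then hash modulo each of the first polynomially many primes. Because a reach-few graph has at most polynomially many paths from $s$ to any vertex, the product of the pairwise differences of path weights has only polynomially many bits, so some small prime $p_k$ keeps all these differences nonzero, and $w \bmod p_k$ is a polynomially bounded weight function making the graph min-unique with respect to $s$. The key point is that one never determines the good prime deterministically: Theorem~\ref{RA} lets an unambiguous machine \emph{test} each candidate prime and reject the bad ones on a unique path, so the resulting weight function is $\UL$-computable, which is all that is needed to conclude membership in $\UL\cap\coUL$. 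To repair your write-up, replace the deterministic computation of $N(c)$ by this guess-and-verify-unambiguously search over primes.
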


\begin{proof}
Let $L$ be in $\ReachFewL$ decided by the machine $M$. Let $G_{(M,x)}$
be the configuration graph of $M$ on input $x$ and $s$ be the start
configuration. Let $t$ be the polynomial that bounds the number of
paths from $s$ to any configuration. Consider the edges in the
lexicographical order. For the $i^{th}$ edge give a weight
$2^{i}$. This is a very good weight function that assigns every path
with unique weight. The problem is that this is not polynomially
bounded.  From this weight function we will give a polynomial number
of weight functions that are logspace computable and polynomially
bounded so that for one of them $G_{(M,x)}$ will be min-unique with
respect to $s$. Since by Theorem~\ref{RA} it is possible to check
whether a given weight function makes the graph min-unique using a
$\UL\cap\coUL$ computation, we can go through each weight function
sequentially.

We will use the well known hashing technique introduced in
\cite{FKS84} for making the graph min-unique.  Let $N$ be the total
number of configurations of $M(x)$. With respect to the above
mentioned weight function, the weight of any path is bounded by
$2^{N+1}$. Let $p_1, p_2, \ldots, p_l$ be the first $l$ distinct prime
numbers so that $\prod_{i=1}^{l}p_i > 2^{N+1}t^2(N)$. Then $l\leq N^5$
and $p_l \leq N^6$. Hence each $p_i$ has a logarithmic bit
representation.

Let {\cal P} be the set of all paths from $s$ and $w_i$ be the weight of the $i^{th}$ path in {\cal P}. Consider the product
$\prod_{i,j}(w_i-w_j)$. This product is bounded by $2^{N+1}t^2(N)$ and is nonzero since for any pair $i,j$ such that $i\neq j$, $w_i\neq w_j$.  
Thus $\prod_{i,j}(w_i-w_j)\neq 0 (\mod \prod p_i)$.  
Hence there should be one (first) $p_k$ with respect to which the product 
is non-zero and modulo this $p_k$, $w_i \neq w_j$ for all $i,j$. That is the weight function $w \mod p_k$ is a weight
function which is $\UL$-computable for
which the configuration graph is min-unique with respect to the start
configuration (\UL-computable because, by Theorem \ref{RA}, we can go through each prime and reject those which are not `good' using a \UL\ computation, until we reach $p_k)$. 
\end{proof}

Buntrock, Hemachandra,
and Siefkes \cite{BHS93} defined, for a space bound $s$
and an unambiguity parameter $a$, the class
\nspace-\amb$(s(n),a(n))$ as the class of languages accepted by
$s(n)$ space bounded nondeterministic machines for which the number of
paths from the start configuration to any configuration is at most
$a(n)$. As one of their main theorems, the authors showed that 
\nspace-\amb$(s(n),a(n))\subseteq \uspace(s(n)\log a(n))\
$ (hence \nspace-\amb$(\log n,                                                  
O(1))\subseteq \UL$). Our method can be used to show that
\nspace-\amb$(s(n),a(n))\subseteq \uspace(s(n)+\log a(n))$,
thus substantially improving their upper bound.

\begin{theorem}
For a space bound $s(n)\geq \log n$ and ambiguity parameter $a(n)$ computable in space $s(n)$ so that $a(n)=2^{O(s(n))}$, \nspace-\amb$(s(n),a(n))\subseteq \uspace(s(n)+\log a(n))$.  
\end{theorem}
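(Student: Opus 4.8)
The plan is to rescale the proof of the previous theorem from $\log n$ up to $s(n)$, invoking Theorem~\ref{RA} and the hashing technique of \cite{FKS84} in exactly the same way. Let $M$ be an \nspace-\amb$(s(n),a(n))$ machine for $L$, let $G_{(M,x)}$ be its configuration graph on input $x$, with start configuration $s_0$ and (without loss of generality, unique) accepting configuration $t_0$, and let $N=2^{O(s(n))}$ bound the number of configurations. Order the edges lexicographically and, for a modulus $q$ to be chosen, give the $i$-th edge the weight $2^{i}\bmod q$; under the un-reduced weighting (edge $i$ has weight $2^{i}$) every path out of $s_0$ has a distinct weight, bounded by $2^{N+1}$. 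For each candidate prime $q$ we form the resulting weighted graph, replace each edge of weight $k$ by a path of $k$ edges (the weights $2^{i}\bmod q$ are nonzero because $q$ is odd), and feed it to a scaled-up version of the machine of Theorem~\ref{RA} to test whether $G_{(M,x)}$ is min-unique with respect to $s_0$ under the weights $w\bmod q$, and, if so, to decide whether $t_0$ is reachable from $s_0$.

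The combinatorial heart is the choice of $q$. Let $\Pi$ be the product, over all configurations $c$ and all unordered pairs $\{P,P'\}$ of distinct $s_0$-to-$c$ paths, of $\bigl(w(P)-w(P')\bigr)$ in the un-reduced weighting. Each factor is a nonzero integer of absolute value below $2^{N+1}$, and, since at most $a(n)$ paths reach any single $c$, there are fewer than $N\cdot a(n)^{2}$ factors; hence $1\le|\Pi|<2^{(N+1)N a(n)^{2}}=2^{2^{O(s(n)+\log a(n))}}$. Because $\prod_{q\le x,\ q\text{ prime}}q=2^{\Theta(x)}$, there is a constant $c_0$ with $\prod_{q\le 2^{c_0(s(n)+\log a(n))},\ q\text{ prime}}q>|\Pi|$, so some prime $q^{*}$ of at most $c_0(s(n)+\log a(n))$ bits fails to divide $\Pi$. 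For that $q^{*}$ the reduced weights $w(P)\bmod q^{*}$ of the paths reaching any fixed $c$ are pairwise incongruent, hence pairwise distinct as integers, so after subdivision the graph has a unique shortest $s_0$-to-$v$ path for every $v$ --- it is min-unique with respect to $s_0$.

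It remains to check the resource bounds. For a given $q\le 2^{c_0(s(n)+\log a(n))}$ the subdivided configuration graph has $2^{O(s(n)+\log a(n))}$ vertices, each named by an $O(s(n)+\log a(n))$-bit string, and its adjacency relation is computable in space $O(s(n)+\log a(n))$: a single transition of $M$ costs $O(s(n))$, and the length $2^{i}\bmod q$ of a subdivided edge is computable by repeated squaring in space $O(\log i+\log q)=O(s(n)+\log a(n))$. Composing the algorithm of Theorem~\ref{RA} with this adjacency subroutine --- which runs in space logarithmic in the number of vertices --- gives a $\uspace(s(n)+\log a(n))$ computation that performs the min-uniqueness test and, on success, the reachability decision, each along a unique path. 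The final machine sweeps $q=3,4,5,\dots$, checks each $q$ for primality by trial division, and for each prime runs this Theorem~\ref{RA} subroutine; on its unique ``not min-unique'' path it moves to the next $q$, and at the first good prime --- which, by the previous paragraph, occurs while $q$ still has $O(s(n)+\log a(n))$ bits --- it accepts or rejects according to whether $t_0$ is reachable from $s_0$. Exactly as in the previous proof, each per-prime sub-computation contributes a single non-rejecting path, so the composition has at most one accepting path; as unambiguous space is closed under constant-factor changes of the space bound, the total space $O(s(n)+\log a(n))$ places $L$ in $\uspace(s(n)+\log a(n))$ (and swapping accept and reject at the good prime shows $L$ lies in the corresponding co-class as well).

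The step I expect to be the main obstacle is the bookkeeping around Theorem~\ref{RA}: verifying that Reinhardt and Allender's machine still runs in space logarithmic in the (now exponential-sized) configuration graph when that graph is presented only implicitly and with modularly-encoded edge weights, and that the composition over the exponentially many candidate primes preserves unambiguity. Everything else --- in particular the claim that $O(s(n)+\log a(n))$-bit primes suffice to isolate shortest paths --- is a routine rescaling of the \ReachFewL\ argument and uses nothing beyond the bound $a(n)$ on the number of start-to-configuration paths.
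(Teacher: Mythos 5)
Your proof is correct and is exactly the argument the paper intends: the paper gives no separate proof of this theorem, asserting only that it follows by rescaling the hashing-plus-Reinhardt--Allender argument from the preceding theorem ($\ReachFewL \subseteq \UL\cap\coUL$), which is precisely what you carry out. Your accounting of the prime size ($O(s(n)+\log a(n))$ bits via the bound $a(n)^2 \cdot 2^{O(s(n))}$ on the number of colliding path pairs per configuration) and your explicit treatment of the implicit-graph composition with the Reinhardt--Allender machine fill in the details the paper leaves unstated.
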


\begin{theorem}\label{reachlfew}
Let $L\in \ReachFewL$ accepted by a reach-few machine $M$. Then the $\# L$ function $\acc_{M}(x)$ is computable in $\FL^{\UL\cap\coUL}$. 
\end{theorem}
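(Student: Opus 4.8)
\medskip
\noindent\textbf{Proof plan.}
The plan is to compute $\acc_{M}(x)$ with a logspace base machine that drives a sequence of \emph{reachability} queries in auxiliary graphs which are themselves reach-few, so that every query is answered by an oracle in $\UL\cap\coUL$ via the theorem $\ReachFewL\subseteq\UL\cap\coUL$ proved above, while the base machine does all the iteration, counting and combining of answers. First I would normalize $M$: assume it has a unique accepting configuration $t$, and note that a reach-few machine can have no cycle reachable from the start configuration $s$ (a reachable cycle would yield unboundedly many $s$-to-$c$ walks), so the part of the configuration graph $G=G_{(M,x)}$ reachable from $s$ is a DAG and $\acc_{M}(x)$ is exactly the number of directed $s$-to-$t$ paths in $G$, a quantity bounded by the reach-few polynomial.

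The central gadget is the weight-expanded graph. Given a positive, polynomially bounded edge-weight function $w$ with weights at most $W=|x|^{O(1)}$, let $H_{w}$ have vertices $(v,j)$ with $v\in V(G)$ and $0\le j\le W$, and an edge $(u,j)\to(v,\,j+w(u,v))$ for each edge $(u,v)$ of $G$. Then $H_{w}$ is acyclic, the number of paths from $(s,0)$ to $(v,j)$ in $H_{w}$ equals the number of $s$-to-$v$ paths in $G$ of total $w$-weight exactly $j$, and since $G$ is reach-few so is $H_{w}$. Call $w$ \emph{separating} if for every vertex $v$ all $s$-to-$v$ paths in $G$ have pairwise distinct $w$-weights; for a separating $w$ each vertex of $H_{w}$ has at most one path from $(s,0)$ (it is reach-unambiguous from the source), and $\acc_{M}(x)$ equals the number of $j\in\{0,\dots,W\}$ for which $(t,j)$ is reachable from $(s,0)$ in $H_{w}$. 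That a separating $w$ of the form $w_{q}(u,v)=2^{(\text{lexicographic index of }(u,v))}\bmod q$ exists for some prime $q=|x|^{O(1)}$ is exactly what the hashing argument in the proof of $\ReachFewL\subseteq\UL\cap\coUL$ already provides, since that proof makes all $s$-rooted paths of the configuration graph pairwise distinct in weight.

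The algorithm then is: the base machine iterates over the polynomially many candidate primes $q$ and, for each, tests whether $H_{q}:=H_{w_{q}}$ is reach-unambiguous from $(s,0)$, using the standard topological-order fact that this holds iff no vertex of $H_{q}$ has two in-neighbours both reachable from $(s,0)$. Concretely, it ranges over all $v\in V(G)$, all $j\le W$, and all pairs of in-edges $(u_{1},v),(u_{2},v)$ of $G$, and for the two corresponding $H_{q}$-predecessors $(u_{1},j-w_{q}(u_{1},v))$ and $(u_{2},j-w_{q}(u_{2},v))$ of $(v,j)$ (when both second coordinates are non-negative) it asks the oracle whether each is reachable from $(s,0)$ in $H_{q}$; it declares $H_{q}$ reach-unambiguous iff no such triple has both predecessors reachable, combining the polynomially many answers in logspace. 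Upon finding the first separating $q$ it sweeps $j=0,\dots,W$, queries ``is $(t,j)$ reachable from $(s,0)$ in $H_{q}$?'', and outputs the number of affirmative answers, which equals $\acc_{M}(x)$. The oracle language throughout is simply reachability in the graphs $H_{q}$; since each $H_{q}$ --- for every $q$, separating or not --- is a reach-few DAG produced by a logspace transformation from $(M,x,q)$, this language lies in $\ReachFewL$, hence in $\UL\cap\coUL$.

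I expect the main obstacle to be keeping this oracle a bona fide $\UL\cap\coUL$ language rather than a promise problem: the argument is arranged so that \emph{every} graph ever handed to the oracle is unconditionally reach-few --- the separating hypothesis is used only outside the oracle, to interpret the counts --- and one must implement the $H_{q}$-path-guessing machine so that its configurations carry no spurious history and the reach-few bound is preserved verbatim. A subordinate point, already present in the proof of $\ReachFewL\subseteq\UL\cap\coUL$, is that the min-unique weight function needed to run Reinhardt--Allender's unambiguous reachability machine (Theorem~\ref{RA}) on $H_{q}$ is only $\UL$-computable; this is absorbed inside the oracle computation and is invisible to the $\FL$ base machine.
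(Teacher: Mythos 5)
Your proposal is correct and follows the same overall strategy as the paper's proof: use the FKS prime-hashing to obtain a polynomially bounded weight function under which the relevant paths get pairwise distinct weights, validate a candidate prime with oracle queries whose answerability in $\UL\cap\coUL$ rests on $\ReachFewL\subseteq\UL\cap\coUL$, and then recover $\acc_M(x)$ by asking, for each possible weight value, whether some $s$-to-$t$ path attains it. Where you genuinely diverge is the prime-validation step. The paper declares a prime bad by exhibiting two equal-weight $s$-to-$t$ paths that differ at an edge $e=(c,c')$, which forces it to query the existence of weighted paths \emph{from intermediate configurations} $c'$ to $t$ and in the edge-deleted graph $G-e$; that these queries still define a $\ReachFewL$ language needs an extra (unstated) argument, since the reach-few promise only bounds paths rooted at the start configuration (paths out of $c'$ are polynomially bounded only because $c'$ is reachable from $s$, a fact the oracle machine does not get to assume). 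You instead test the stronger property that the weight-expanded graph $H_q$ is reach-unambiguous from $(s,0)$, via the local criterion that no vertex has two in-neighbours reachable from the source, so \emph{every} oracle query is source-rooted reachability in a graph that is unconditionally reach-few; this is cleaner, keeps the oracle a bona fide (non-promise) $\ReachFewL$ language, and costs nothing since the hashing bound already yields a prime separating all $s$-rooted paths, not just the $s$-to-$t$ ones. The only blemish is notational: your $W$ is introduced as a bound on individual edge weights but then used as the range of accumulated path weights; it should be the (still polynomial) bound on total path weight, e.g.\ $N\cdot\max_e w(e)$.
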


\comment{
Proof is given in the Appendix.
}

\begin{proof} 
The idea is to compute the number of paths from $s$ to $t$ of a
\ReachFewL-computation with queries to $\UL\cap\coUL$ language using a
logspace machine. If we make sure that all paths from $s$ to $t$ are
of different weights then we can count them by making queries of the
form ``is there a path of length $i$ from $s$ to $t$'' for all $i\leq
N$ and by counting the number of positive answers.

We will use primes as before. But among polynomially many primes we
have to reject those primes that does not give distinct weights to
paths from $s$ to $t$. Notice that Theorem~\ref{RA} can only be used
to rejects primes that do not make the graphs min-unique. It is
possible that some prime makes the graph min-unique with respect to
$s$ but the graph may still have two paths from $s$ to $t$ of the same
weight. For checking this more strict condition, we use the above
result that $\ReachFewL$ is in $\UL\cap\coUL$.

Let $L$ be a language in $\ReachLFew$ witnessed by a machine $M$ and a
polynomial $q$ so that for every $x$, the number of paths from the
start configuration of $M(x)$ to any configuration $c$ is bounded
$q(|x|)$. Let $G_{(M,x)}$ denote the standard layered configuration
graph of $M(x)$. Then this graph also satisfy the property that the
number of paths from the start configuration in the first layer to any
configuration $c$ is bounded by $q(|x|)$. Then the following language
is in $\UL\cap\coUL$: $L = \{(x,c,i)\mid$ there is a path of length
$i$ from $s$ to $c$ in $G_{(M,x)}\}$.

In order to check whether $p$ is a `bad' prime, we need to check
whether there are two paths from $s$ to $t$ of the same weight.

\noindent
``$p$ \mbox{ is bad } $\Leftrightarrow \exists w \exists e =(c,c') \exists a \exists$ a path of length $a$ from $s$ to $c \wedge \exists$ a path of weight $w-w(e)-a$ from $c'$ to $t$  $\wedge \exists$ a path of length $w$ from $s$ to $t$ in $G-{e}$'' 

This can be decided with polynomially many queries to $L$. Once we get a good prime $p$, we can use $L$ as oracle to count the number of distinct paths from $s$ to $t$ using a deterministic logspace machine. This gives $\ReachLFew \subseteq \UL\cap\coUL$.

\end{proof}
\comment{
\begin{definition}
A vertex $v$ in $G$ is said to be {\em unconjusted} if there are only polynomially many paths from $s$ to $v$ and from $v$ to $t$. 

A graph $G$ is said to be {\em uncongested} if it contains a vertex $v$ that is on some path from $s$ to $v$ that is unconjusted vertex.

Define $\uncongested = \{\langle G,s,t \rangle | G \textrm{ is uncongested} \}$.
\end{definition}

\begin{theorem}
$\uncongested \in \SPL$.
\end{theorem}
}

\begin{corollary}
$\ReachLFew \subseteq \UL\cap\coUL$
\end{corollary}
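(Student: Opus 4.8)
The plan is to read off the corollary from Theorem~\ref{reachlfew} together with the fact that $\UL\cap\coUL$ is closed under logspace Turing reductions. First I would unfold the definition of $\ReachLFew$: if $L\in\ReachLFew$ then there is a reach-few machine $M$ and a logspace-computable predicate $R$ with $x\in L$ iff $R(x,\acc_{M}(x))$. By Theorem~\ref{reachlfew} the function $x\mapsto\acc_{M}(x)$ lies in $\FL^{\UL\cap\coUL}$, so $L$ is decided by the deterministic logspace procedure that first computes $\acc_{M}(x)$ using an oracle $A\in\UL\cap\coUL$ and then evaluates the logspace predicate $R(x,\cdot)$ on the result. Hence $L\in\L^{\UL\cap\coUL}$, and the whole content of the corollary reduces to the closure statement $\L^{\UL\cap\coUL}\subseteq\UL\cap\coUL$.

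For that closure, I would use the standard guess-and-verify composition. Let $A\in\UL\cap\coUL$ be witnessed by an unambiguous machine $M_{A}$ for $A$ and an unambiguous machine $M_{\bar A}$ for $\bar A$, and let $D$ be the deterministic logspace oracle machine above. Simulate $D$ step by step; whenever $D$ enters the query state with a query $q$, nondeterministically guess the answer bit $b$, verify it by running $M_{A}$ on $q$ if $b=1$ and $M_{\bar A}$ on $q$ if $b=0$, and continue the simulation of $D$ with answer $b$ only along the unique accepting branch of the verifier. Since $D$ is deterministic and each verification subcomputation has exactly one accepting path for the correct answer bit and none for the incorrect one, the composed machine has exactly one accepting computation when $D$ accepts and none when $D$ rejects; thus $L\in\UL$. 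Flipping the final accept/reject decision of $D$ (legitimate since $D$ is total and deterministic) gives, by the same argument, an unambiguous machine for $\overline{L}$, so $L\in\coUL$ as well. Combining, $L\in\UL\cap\coUL$.

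The one point requiring care is the oracle-query bookkeeping inside logspace: a query string $q$ may be polynomially long, so it cannot be stored, and the verifiers $M_{A},M_{\bar A}$ expect their input on a read-only tape. This is handled exactly as in the classical proof that $\L^{\NL}=\NL$: store only the (logspace) configuration of $D$ immediately before it begins writing $q$, and whenever the verifier needs the $j$-th symbol of its input, recompute it on the fly by re-running $D$ deterministically from that stored configuration and counting off $j$ symbols of the query it produces; once the query is resolved, resume $D$ from the stored configuration with the answer bit installed. This keeps the simulation in $O(\log n)$ space and does not change the number of accepting paths. I expect this routine bookkeeping — not any conceptual difficulty — to be the main obstacle, since everything else is immediate once Theorem~\ref{reachlfew} is in hand.
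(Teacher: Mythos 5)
Your proposal is correct and follows essentially the same route as the paper: the corollary is read off from Theorem~\ref{reachlfew} (whose proof already concludes with exactly this deduction) by composing the $\FL^{\UL\cap\coUL}$ computation of $\acc_{M}(x)$ with the logspace predicate $R$ and invoking closure of $\UL\cap\coUL$ under logspace Turing reductions. The only difference is that you spell out the guess-and-verify closure argument $\L^{\UL\cap\coUL}\subseteq\UL\cap\coUL$, which the paper leaves implicit as standard.
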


\section{Complexity of Min-uniqueness}

Theorem~\ref{RA} states that min-uniqueness is sufficient for showing $
\NL =\UL$. Next we prove that if $\NL=\UL$ then there is a $\UL$-computable 
weight function that makes any directed acyclic graph min-unique with 
respect to the start vertex. Thus min-uniqueness is necessary and sufficient 
for showing $\NL = \UL$.

\begin{theorem}
$\NL=\UL$ if and only if there is a polynomially-bounded
  $\UL$-computable weight function $f$ so that for any directed
  acyclic graphs $G$, $f(G)$ is min-unique with respect to $s$.
\end{theorem}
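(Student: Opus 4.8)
\noindent\emph{Proof plan.} The plan is to prove the two directions separately: the ``if'' direction is an application of Theorem~\ref{RA}, and the ``only if'' direction is an explicit construction of $f$ from a canonical shortest-path tree, which becomes $\UL$-computable once we assume $\NL=\UL$.

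\emph{The ``if'' direction.} Suppose such an $f$ exists; I want $\NL\subseteq\UL$. Take $L\in\NL$ and reduce it to reachability in a DAG: attach a step counter to the $\NL$-machine for $L$ so that its configuration graph $G$ is a polynomial-size DAG with source $s$, and $x\in L$ iff a fixed configuration $t$ is reachable from $s$. Compute $f(G)$; since its weights are positive and polynomially bounded, subdivide each weight-$k$ edge into a path of length $k$ to get an equivalent unweighted graph $\hat G$ that is min-unique with respect to the image $\hat s$ of $s$ and in which the reachability of (the image of) $t$ is unchanged. Now run the machine of Theorem~\ref{RA} on $(\hat G,\hat s,\hat t)$: because $\hat G$ really is min-unique with respect to $\hat s$, case~2 of that theorem applies, so the machine decides reachability on a unique path, placing $L$ in $\UL$. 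The one point that needs care is that this three-stage procedure is itself a $\UL$ computation, i.e.\ that the Reinhardt--Allender routine composes with a merely $\UL$-computable (rather than logspace-computable) min-uniqueness procedure: whenever the unambiguous reachability routine needs a bit of $\hat G$, it re-invokes the unambiguous computation of $f$, and since $f$ is a function each such invocation has a unique accepting path returning the same value, so the global computation stays unambiguous. I expect this composition bookkeeping to be the only technical obstacle in this direction, and it is routine (it is the ``easy observation'' noted in the introduction).

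\emph{The ``only if'' direction.} Assume $\NL=\UL$; then $\FL^{\NL}=\FL^{\UL}$, so anything computable in $\FL^{\NL}$ is $\UL$-computable. I would define $f$ as follows. Given a DAG $G$ with source $s$, let $d(v)$ be the length of the shortest $s$-to-$v$ path; computing $d(v)$ and deciding reachability from $s$ both lie in $\FL^{\NL}$, hence are $\UL$-computable. For a reachable $v\neq s$ let $\mathrm{par}(v)$ be the smallest-indexed in-neighbour $u$ of $v$ with $u$ reachable from $s$ and $d(u)=d(v)-1$ (this exists and is $\UL$-computable). Set $f(G)(u,v)=1$ if $u=\mathrm{par}(v)$ and $f(G)(u,v)=2$ otherwise. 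This weight function is positive, bounded by the constant $2$, and $\UL$-computable.

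It remains to verify that $f(G)$ is min-unique with respect to $s$. Fix a reachable $v$. Iterating $\mathrm{par}$ from $v$ back to $s$ yields the ``tree path'', which uses only weight-$1$ edges and has length exactly $d(v)$, hence $f$-weight $d(v)$. Any $s$-to-$v$ path $P$ has length $\ell\ge d(v)$ and $f$-weight $\ell+(\text{number of non-tree edges of }P)$: if $P$ uses a non-tree edge its weight is $\ge d(v)+1$, and if $P$ uses only tree edges then, scanning $P$ backwards from $v$, the last edge into each vertex $w$ must be $(\mathrm{par}(w),w)$, so $P$ is forced to equal the tree path. Hence the tree path is the unique minimum-$f$-weight $s$-to-$v$ path; for unreachable $v$ the condition is vacuous, and acyclicity is used only to guarantee that iterating $\mathrm{par}$ terminates. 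The subtlety worth flagging is that the distance constraint $d(u)=d(v)-1$ must be part of the definition of $\mathrm{par}$: without it the tree path need not be a shortest path, and a shorter non-tree path could match or beat its $f$-weight; with the constraint the argument goes through as above.
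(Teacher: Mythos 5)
Your proposal is correct and follows essentially the same route as the paper: the ``if'' direction invokes Theorem~\ref{RA} after subdividing weighted edges, and the ``only if'' direction builds the canonical BFS/shortest-path tree (lexicographically first predecessor in the previous distance level, computable in $\FL^{\NL}=\FL^{\UL}$ under the hypothesis) and assigns weight $1$ to tree edges and a larger weight to the rest. The only differences are cosmetic --- the paper uses weight $n^2$ rather than $2$ on non-tree edges, and you verify the min-uniqueness and the $\UL$-composition step in more detail than the paper bothers to.
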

\begin{proof}
The reverse direction follows from the above theorem due to Reinhardt
and Allender. For the other direction the idea is to compute a
spanning tree of $G$ rooted at $s$ using reachability queries.  Since
$\NL$ is closed under complement, under the assumption that $\NL=\UL$,
reachability is in $\UL\cap\coUL$. Thus the following language $A =
\{(G,s,v,k) \mid$ there is a path from $s$ to $v$ of length $\leq k
\}$ is in $\UL\cap\coUL$.

The tree can be described as follows.  We say that a vertex $v$ is in
level $k$ if the minimum length path from $s$ to $v$ is of length
$k$. A directed edge $(u,v)$ is in the tree if for some $k$ (1) $v$ is
in level $k$ (2) $u$ is the lexicographically first vertex in level
$k-1$ so that $(u,v)$ is an edge.

It is clear that this is indeed a well defined tree and deciding
whether an edge $e=(u,v)$ is in this tree is in $\L^{A} \subseteq
{\UL\cap\coUL}$.

Now for each edge in the tree give a weight 1. For the rest of the
edges give a weight $n^2$. It is clear that shortest path from a vertex
with respect to this weight function is min-unique with respect to $s$
and it is computable using a $\UL$-transducer.
 
\end{proof}

\`{A}lvarez and Jenner~\cite{AJ93} defines $\OptL$ as the logspace
analog of Krental's \OptP. They show that $\OptL$ captures the
complexity of some natural optimization problems in the logspace
setting (eg. computing lexicographically maximum path of length $\leq
n$ from $s$ to $t$ in a directed graph). They also consider
$\OptL[\log n]$ where the function values are bounded by a polynomial
(hence has $O(\log n)$ bits representations).  Here we revisit the
class \OptL\ \cite{AJ93} and study them in relation to the notion of
min-uniqueness. We define $\OptL$ as a minimization class and show
that computing the minimum length path from $s$ to $t$ in a directed
graph is complete (under metric reductions) for $\OptL[\log n]$.
 
\begin{definition}
An $\NL$-transducer is a nondeterministic logspace bounded Turing
machine with a one-way output tape in addition to its read-only input
tape and read/write work tapes. We will assume that an
$\NL$-transducer will not repeat any configuration during its
computation. Hence its configuration graph contains no cycles and all
computation paths will halt with accepting or rejecting state after
polynomially many steps. Let $M$ be such a $\NL$-transducer.  An
output on a computation path of $M$ is valid if it halts in an
accepting state. For any input $x$, {\em opt}$_{M}(x)$ is the minimum
value over all valid outputs of $M$ on $x$. If all the paths reject,
then {\em opt}$_M(x) = \infty$.  Further, $M$ is called {\em
  min-unique} if for all $x$ either $M(x)$ rejects on all paths or
$M(x)$ outputs the minimum value on a unique path.
\end{definition}

\begin{definition}
A function $f$ is in \OptL\ if there exists a $\NL$-transducer $M$ so that for any $x$, $f(x) = \mbox{{\em opt}}_{M}(x)$. 
A function $f$ is in $\UOptL$ if there is a min-unique nondeterministic transducer $M$ so that for any $x$, $f(x)=\mbox{\it opt}_{M}(x)$. Define $\OptL[\log n]$ and $\UOptL[\log n]$ as the restriction of $\OptL$ and $\UOptL$ where the output of the transducers are bounded by $O(\log n)$ bits.
\end{definition}

If the output is unrestricted, then the computation path of an $\NL$-transducer can be encoded in the output and hence all the output can be made distinct. Hence the classes $\OptL$ and $\UOptL$ are equivalent. But if we restrict the output to be of $O(\log n)$ bits the classes $\OptL$ and $\UOptL$ coincide if and only if $\NL=\UL$ as we show next. 

We will need the following proposition shown in \cite{AJ93}. 
$\FL^{\NL}[\log n]$ denotes the subclass of $\FL^{\NL}$ where the output 
length is bounded by $O(\log n)$. 

\begin{proposition}[\cite{AJ93}]
$\OptL[\log n]= \FL^{\NL}[\log n]$. 
\end{proposition}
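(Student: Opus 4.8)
The plan is to prove the two inclusions separately, using in both directions that $\NL = \coNL$, and hence that $\NL$ is closed under logspace Turing reductions (so $\L^{\NL} = \NL$, with the analogous statement for the $O(\log n)$-output-bounded versions).

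For $\OptL[\log n] \subseteq \FL^{\NL}[\log n]$, let $f \in \OptL[\log n]$ be witnessed by an $\NL$-transducer $M$ whose valid outputs have at most $c\log n$ bits, so $f(x) \in \{0,1,\dots,n^{c}\}\cup\{\infty\}$. First I would observe that the language $B = \{(x,v) \mid M(x)$ has a valid output of value at most $v\}$ is in $\NL$: a nondeterministic logspace machine simulates $M(x)$ along a guessed computation path, keeping the $O(\log n)$-bit output value it is producing on the work tape, and accepts iff the path halts in an accepting state with output $\le v$. Then a deterministic logspace machine with oracle $B$ computes $f(x)$ by binary search over $v\in\{0,\dots,n^{c}\}$ (reporting the special value $\infty$ if $(x,n^{c})\notin B$); this uses $O(\log n)$ queries, logspace, and an $O(\log n)$-bit answer, so $f\in\FL^{\NL}[\log n]$.

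For $\FL^{\NL}[\log n]\subseteq\OptL[\log n]$, let $g\in\FL^{\NL}[\log n]$; note $g$ is total and $g(x)$ has at most $c\log n$ bits. The set $C=\{(x,v)\mid g(x)\le v\}$ is decidable in $\L^{\NL}$ (run the machine for $g$, then compare), and since $\NL$ is closed under complement this is the same as $\NL$, so $C$ is decided by some $\NL$ machine $M'$. I would then build an $\NL$-transducer $M$ as follows: on input $x$, first guess a value $v$ of at most $c\log n$ bits and store it on the work tape; then simulate $M'$ on $(x,v)$; if the simulated path accepts, write $v$ on the one-way output tape and halt in an accepting state, otherwise halt rejecting. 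Because $M'$ has an accepting path on $(x,v)$ exactly when $g(x)\le v$, the set of valid outputs of $M(x)$ is $\{v\mid g(x)\le v\}$, whose minimum is $g(x)$, and every valid output has $O(\log n)$ bits; hence $\text{opt}_M(x)=g(x)$ and $g\in\OptL[\log n]$.

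The main obstacle is the second inclusion, for two reasons. First, one must replace the $\NL$-oracle computation of $g$ by a single nondeterministic logspace computation, which is precisely where $\NL=\coNL$ is essential: ``no'' answers to oracle queries have to be certified inside the machine. Second, an $\NL$-transducer has a one-way output tape, so it cannot write a candidate value and then retract it; this is handled by guessing $v$ on the work tape and internally verifying it before committing anything to the output tape. Throughout, the $[\log n]$ bound is preserved because every intermediate quantity — the guessed $v$, the binary-search candidate, and the final output — has $O(\log n)$ bits.
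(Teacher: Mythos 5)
Your proof is correct. The paper itself does not prove this proposition---it imports it from \`{A}lvarez and Jenner \cite{AJ93}---but your two inclusions follow the standard argument for that result: binary search over the $O(\log n)$-bit optimum using the $\NL$ language ``$M(x)$ has a valid output $\leq v$'' for one direction, and guess-and-verify of a candidate value $v$ (relying on $\NL=\coNL$ to absorb the deterministic oracle computation of $g$ into a single $\NL$ machine) for the other. The only point worth flagging is the bookkeeping for the value $\infty$ (all paths rejecting), which you handle correctly in the first direction and which does not arise in the second since $\FL$ functions are total and finite-valued.
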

  
\begin{theorem}
$\OptL[\log n]=\UOptL[\log n]$ if and only if $\NL = \UL$.
\end{theorem}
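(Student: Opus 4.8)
The plan is to prove the two implications separately, using the characterization $\OptL[\log n] = \FL^{\NL}[\log n]$ from the preceding proposition on one side and the definition of $\UOptL[\log n]$ via min-unique transducers on the other.

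For the direction $\NL = \UL \Rightarrow \OptL[\log n] = \UOptL[\log n]$, I would first note that $\UOptL[\log n] \subseteq \OptL[\log n]$ is immediate from the definitions, so only the reverse containment needs work. Take $f \in \OptL[\log n]$ witnessed by an $\NL$-transducer $M$ whose valid outputs are $O(\log n)$-bit values. The idea is that, since outputs are polynomially bounded, $f(x) = \min_M(x)$ can be found by binary search: the predicate ``$M$ has a valid output of value $\le k$'' is an $\NL$ predicate, hence under $\NL = \UL$ it lies in $\UL \cap \coUL$ (using closure of $\NL$ under complement). I would then build a min-unique transducer $M'$ for $f$ as follows. $M'$ first deterministically locates the correct value $v = f(x)$ by making the $O(\log n)$ reachability-style queries of the binary search, each resolved on a unique computation path by the $\UL \cap \coUL$ machines (composing unambiguous subroutines in logspace preserves having a unique accepting path — this is the standard fact that lets one sequence $\UL \cap \coUL$ oracle calls). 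Once $v$ is pinned down, $M'$ outputs $v$ on that single surviving path and rejects on all others; this makes $M'$ min-unique with output $v = \mathrm{opt}_{M'}(x) = f(x)$. Hence $f \in \UOptL[\log n]$.

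For the direction $\OptL[\log n] = \UOptL[\log n] \Rightarrow \NL = \UL$, I would argue via a complete problem. By Result 3 (the completeness of {\sc ShortestPathLength} for $\OptL[\log n]$ under metric reductions — though for this implication even the weaker observation that {\sc ShortestPathLength} $\in \OptL[\log n]$ suffices), the shortest-path-length function is in $\OptL[\log n]$, hence by hypothesis in $\UOptL[\log n]$, witnessed by a min-unique transducer $N$. Given a directed graph $G$ with vertices $s, t$, I want an unambiguous logspace decision procedure for ``is $t$ reachable from $s$''. Run $N$ on (an encoding of) the instance asking for the length of the shortest $s$-$t$ path; because $N$ is min-unique, exactly one path produces the optimal value when one exists, and all paths reject when $t$ is unreachable (so $\mathrm{opt} = \infty$). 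This already gives a $\UL \cap \coUL$ algorithm for directed reachability: accept (on the unique optimal-output path) iff a finite shortest-path length is output. Since directed $st$-connectivity is $\NL$-complete, $\NL \subseteq \UL$, and with $\coUL$ closure this yields $\NL = \UL$.

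The main obstacle I anticipate is making the first direction genuinely \emph{min-unique} rather than merely unambiguous: I must ensure that after the binary-search phase the transducer has exactly one live computation path on which to emit the answer, and that no ``spurious'' valid output of a smaller value is ever produced on a side path. Getting this right means carefully arranging $M'$ so that the \emph{only} output it ever writes is the single value $v$ determined by the search, emitted on the single path that the composed $\UL \cap \coUL$ subroutines leave accepting — in particular $M'$ should never simulate $M$'s own (possibly highly ambiguous) output-generating paths. The composition lemma for sequencing unambiguous logspace computations with logspace glue is what carries this through, and I would state and use it explicitly; the rest is routine bookkeeping about the $O(\log n)$ bit bound ensuring only polynomially many binary-search steps.
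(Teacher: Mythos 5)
Your first direction is essentially the paper's argument: the paper writes $\OptL[\log n]=\FL^{\NL}[\log n]=\FL^{\UL\cap\coUL}[\log n]$ under the hypothesis and replaces each oracle query by an unambiguous subroutine, continuing only on the unique path that learns the answer; your binary search over the $\NL$ predicate ``$M$ has a valid output $\le k$'' is just a concrete instance of such an $\FL^{\NL}[\log n]$ computation, and the composition bookkeeping you flag is exactly the right thing to check. That half is fine.

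The second direction has a genuine gap. A min-unique transducer $N$ for {\sc ShortestPathLength} is only guaranteed to have a \emph{unique path outputting the minimal value} $d$; it may have arbitrarily many accepting paths outputting values larger than $d$. Your procedure ``accept (on the unique optimal-output path) iff a finite length is output'' is not implementable: the machine does not know $d$ in advance, so on a given computation path it cannot tell whether it is on the unique $d$-outputting path or on one of the many paths outputting $d+1$, $d+2$, etc. Accepting on every path with a finite output decides reachability correctly but is not unambiguous — this is precisely why the paper can only conclude $\NL\subseteq\SPL$ from $\NL\le\UOptL[\log n]$ (via $\UOptL[\log n]\subseteq\FL^{\SPL}[\log n]$), and why the equivalence with $\UL$ needs more. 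The paper's device is to pin the minimum to a value known in advance: using closure of $\NL$ under complement, for $L\in\NL$ it builds an $\NL$-transducer whose output is $0$ on some path exactly when $x\notin L$, is $1$ on some path when $x\in L$, and is a large value on all ``don't know'' paths; thus the optimum is the fixed constant $0$ iff $x\notin L$, and no path ever outputs $0$ when $x\in L$. A min-unique transducer for this function then outputs $0$ on a unique path iff $x\notin L$, so ``accept iff the output equals $0$'' is unambiguous and places $\overline{L}$ in $\UL$, whence $\NL=\UL$. You need some such construction that makes the minimal value a known threshold; {\sc ShortestPathLength} alone does not provide one.
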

\begin{proof}
$\NL=\UL \Rightarrow \OptL[\log n]= \UOptL[\log n]$: Since $\NL$ is
  closed under complement, if $\NL=\UL$ then $\NL=
  \UL\cap\coUL$. Hence $\OptL[\log n] = \FL^{\NL}=
  \FL^{\UL\cap\coUL}$. For a function $f\in \OptL$, let $M$ be $\FL$
  machine that makes query to a language $L\in \UL\cap\coUL$ and
  computes $f$. Let $N$ be the unambiguous machine that decided
  $L$. The {min-unique} transducer $M'$ will simulate $M$ and whenever
  a query $y$ is made to $L$, it will simulate $N$ on $y$ and continue
  only on the unique path where it has an answer. In the end $M'$ will
  output the value computed by $M$ on a unique path.

$\OptL[\log n] = \UOptL[\log n] \Rightarrow \NL = \UL$: Let $L\in
  \NL$. Since $\NL$ is closed under complement, there is a
  nondeterministic machine $M$ that on input $x$ accepts on some path
  and outputs `?' on all other paths if $x\in L$, and rejects on some
  paths and outputs `?' on all other paths if $x\not\in L$. We will
  show that under the assumption $L\in \coUL$. Consider the
  $\NL$-transducer which on input $x$ simulates $M(x)$ and outputs 1
  if $M$ accepts and outputs $0$ if $M$ rejects and outputs a large
  value on paths with `?'. Let $N$ be min-unique machine that computes
  this $\OptL$ function. Thus if $x\not\in L$ then $N(x)$ has a unique
  path on which it outputs 0 (and there may be paths on which it
  outputs 1). If $x\in L$ then there is no path it outputs 0. Now
  consider the machine $N'$ that simulates $N$ and if $N$ outputs 0
  then it accepts. For all other values $N'$ rejects. Clearly this is
  an unambiguous machine that decides $\overline{L}$.

\end{proof}

Next we will exhibit a natural problem that is complete for $\OptL[\log n]$. 
Consider the computational problem {\sc ShortestPathLength}

\begin{itemize}
\item[-] {\sc ShortestPathLength}: Given $(G,s,t)$ where $G = (V,E)$ is a directed graph and $s$ and $t$ are two vertices in $V$.
Compute the length of the shortest path from $s$ to $t$. If no path exists then output $\infty$. 
\end{itemize}

\begin{theorem}
{\sc ShortestPathLength} is complete for $\OptL[\log n]$ (under metric reductions)
\end{theorem}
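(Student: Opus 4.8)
First I would exhibit an $\NL$-transducer $M$ witnessing {\sc ShortestPathLength}$\,\in\OptL[\log n]$. On input $\langle G,s,t\rangle$ with $n=|V|$, $M$ performs a nondeterministic walk from $s$, at each step guessing an out-neighbour of the current vertex and incrementing a counter $i$; as soon as it is at $t$ it outputs $i$ and accepts, and if $i$ exceeds $n$ before reaching $t$ it rejects. The counter (costing only $O(\log n)$ space) ensures that no configuration repeats and that every path halts within $n{+}1$ steps, as required of an $\NL$-transducer; every valid output is the length of an $s$--$t$ walk, and the shortest $s$--$t$ path is realised by one such walk, so $\textit{opt}_M(\langle G,s,t\rangle)$ equals the shortest-path length and is $\infty$ exactly when $t$ is unreachable. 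The output has at most $\lceil\log(n{+}1)\rceil$ bits, so {\sc ShortestPathLength}$\,\in\OptL[\log n]$.

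\textbf{Hardness.} Let $f\in\OptL[\log n]$ be given by an $\NL$-transducer $N$ whose outputs have at most $\ell=O(\log n)$ bits, so every output value lies in $\{0,1,\dots,2^{\ell}-1\}$; by a standard padding argument (append a polynomial step-counter and spin it with dummy moves before halting) I may assume every computation path of $N(x)$ has the same length $p(|x|)$. From $N(x)$ I would build a directed graph $G$ whose vertices are pairs $(c,v)$, where $c$ ranges over configurations of $N(x)$ and $v\in\{0,\dots,2^{\ell}-1\}$ records the numeric value of the output string produced so far (read most-significant-bit first): a transition of $N$ from $c$ to $c'$ writing a bit $b$ gives an edge $(c,v)\to(c',2v+b)$, and one writing nothing gives $(c,v)\to(c',v)$. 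For each accepting configuration $c$ I additionally put in a ``decrement chain'' $(c,j)\to(c,j{-}1)$ for $1\le j\le 2^{\ell}-1$ together with an edge $(c,0)\to t$, where $t$ is a fresh sink; the source is $s=(c_0,0)$ for the initial configuration $c_0$. Since $N$ is logspace-bounded it has polynomially many configurations and $2^{\ell}$ is polynomial, so $G$ has polynomially many vertices, each namable with $O(\log n)$ bits, and adjacency is decidable in logspace from $N$'s transition function; hence $G$ is logspace-computable from $x$.

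Next I would check that the lengths of $s$--$t$ paths in $G$ form exactly the set $\{\,p(|x|)+1+v : v\ \text{is an output value of an accepting path of}\ N(x)\,\}$. Indeed, an $s$--$t$ path must finish with an edge $(c,0)\to t$ with $c$ accepting, and the vertex $(c,0)$ is reachable only by first simulating (in $p(|x|)$ steps) an accepting computation of $N$ that outputs some value $v$ and lands at $(c,v)$, then walking the decrement chain of length $v$ to $(c,0)$; conversely every accepting computation of $N$ yields such a path. Therefore the shortest $s$--$t$ path has length $p(|x|)+1+f(x)$ when $f(x)<\infty$, while if $f(x)=\infty$ no accepting configuration is reached, no $s$--$t$ path exists, and {\sc ShortestPathLength} returns $\infty$. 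This yields the metric reduction: $h_2(x)=\langle G,s,t\rangle$ and $h_1(x,y)=y-p(|x|)-1$ for $y\neq\infty$, $h_1(x,\infty)=\infty$, both clearly logspace-computable.

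\textbf{Main obstacle.} The delicate point is entirely in the hardness direction: the gadget must create \emph{no spurious short $s$--$t$ paths}. This is exactly why I equalise all computation-path lengths of $N$ to $p(|x|)$ --- so that a ``cheap'' accepting path cannot masquerade as a short $s$--$t$ path --- and why the only edges into $t$ leave the $v{=}0$ accepting vertices, so that rejecting computations never reach $t$. A minor additional subtlety, how to compare outputs of differing bit-lengths, disappears once $v$ tracks the numeric value directly (leading zeros being immaterial). The remaining claims --- polynomial size and logspace-constructibility of $G$, and the $\textit{opt}$/metric-reduction bookkeeping --- are routine.
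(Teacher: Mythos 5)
Your proof is correct, and its overall strategy matches the paper's: in both arguments the membership direction is the obvious guess-a-path transducer, and the hardness direction arranges that each accepting computation of the $\OptL[\log n]$-transducer with output value $o$ becomes an $s$--$t$ path of length $(\text{fixed padding}) + o$, so that the shortest path recovers the minimum output. The implementations of that idea differ: the paper builds a layered graph on triples $(i,c,o)$ and puts weight $(o'-o)+n^k$ on each edge so that the increments \emph{telescope} to $o + p(n)n^k$ along any path, then expands the polynomially bounded weights into unweighted paths; you instead carry the output value $v$ inside the vertex $(c,v)$ and convert it to path length all at once via a decrement chain of length $v$ hanging off each accepting configuration. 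Both need the same normalization (all computation paths equalized to length $p(|x|)$ --- the paper via dummy edges $(i,c,o)\to(i+1,c,o)$ at accepting configurations, you via a step counter in the configuration, which also guarantees that halting configurations are reached only by length-$p(|x|)$ paths, the one point your unlayered construction silently relies on). Your gadget avoids the weighted-to-unweighted conversion step at the cost of blowing the vertex set up by a factor of $2^{\ell}$, which is still polynomial; neither version gains in generality over the other.
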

\begin{proof}
For the containment in $\OptL[\log n]$, consider the $\NL$-transducer, which 
guesses a path of length $\leq n$ from $s$ to $t$. It the guess succeeds 
then outputs the length of the path. Else it rejects. If $G$ has a path from 
$s$ to $t$, then the best path will be of length $\leq n$ hence the minimum 
among the outputs will be the length of the best path. 
 
For the completeness, let $f$ be a function in $\OptL[\log n]$ computed by 
an $\NL$-transducer $M$. Since the output of $M$ is of length $c\log n$ for 
some constant $c$, we will assume that $M$ stores the intermediate value of 
the output on a separate work-tape (called the output work-tape) until the 
end of the computation, and before halting, $M$ copies the contents of this 
work tape to the output tape deterministically and halts. Thus the 
configuration of this machine will also include the contend of this output 
work-tape. We will denote a typical configuration by the tuple $(c,o)$ where 
$o$ is the content of the output work tape. 
 We will assume that at the start configuration the contents of this work-tape is 0.

Consider the following layered weighted graph $G_{(M,x)}$. $G_{(M,x)}$ has $p(|x|)+1$ layers were $p$ is the polynomial bounding the running time of $M$. For $1\leq i \leq p(|x|)$, the $i^{th}$ layer has vertices $(i,c,o)$ where $(c,o)$ is a configuration. The last layer which has just one vertex $t$. 
There is an edge from $(i,c,o)$ to $(i+1, c', o')$ if there is a valid move from the configuration $(c,o)$ to $(c',o')$. The weight of this edge is $(o'- o) + n^{k}$ where $k$ is a large constant so that $n^k > p(n)\times n^c $. We will also add edges from $(i,c,o)$ to $(i+1,c,o)$ if $(c,o)$ is an accepting configuration. The weight of this edge is $n^k$. Finally we will add an edge with weight $n^k$ from $(p(n),c,o)$ to $t$ if $(c,o)$ is an accepting configuration. For correctness, any computation path of $M$ with an output $o$ corresponds to a path in $G_{(M,x)}$ from the start configuration to $t$ of weight $o+p(n)n^k$.  Since the weights on the edges are positive and bounded by a polynomial, it is easy to replace to each edge with weight $l$ with a path of length $l$.   
\end{proof}

It can be verified that the standard reductions from directed graph
reachability to other \NL-complete problems also shows that a version
of their optimization problems are $\OptL[\log n]$ complete. For
example {\sc DFAShortestWordLength} 
(Given a DFA $M$. Find the length
of the shortest word that $M$ accepts if $L(M)$ is nonempty) and 
{\sc
  WordGenLength} (Given a set $X$ with an associative binary
operation, a subset $S\subseteq X$, and a word $w$ over $X$. Find the
length of the shortest generation sequence of $w$) are complete for
$\OptL[\log n]$.

\comment{
\begin{theorem} The following problems are complete for $\OptL[\log n]$.
\begin{itemize}
\item[-] {\sc DFAShortestWordLength} Given a DFA $M$. Find the length
  of the shortest word that $M$ accepts if $L(M)$ is nonempty) and

\item[-]

{\sc WordGenLength}: Given a set $X$ with an associative binary operation, a subset $S\subseteq X$, and a word $w$ over $X$. Find the length of the shortest 
generation sequence of $w$.
\end{itemize}
\end{theorem}
}

As $\UOptL[\log n]\subseteq \OptL[\log n]$, $\UOptL[\log n]$ is in 
$\FL^{\NL}[\log n]$. Here we show that $\UOptL[\log n]$ can be computed using a $\SPL$ oracle. Thus if $\NL$ reduces to $\UOptL[\log n]$, then $\NL\subseteq \SPL$.

\begin{theorem}\label{uoptl}
$\UOptL[\log n] \subseteq \FL^{\SPL}[\log n]$
\end{theorem}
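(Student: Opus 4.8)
The plan is to adapt the hashing-plus-min-uniqueness strategy from the proof that $\ReachFewL\subseteq\UL\cap\coUL$, but now the ``few'' property holds only along minimum-weight paths rather than along all paths, so the counting of good primes has to be done with an $\SPL$ oracle instead of a $\UL\cap\coUL$ one. Let $f\in\UOptL[\log n]$ be witnessed by a min-unique $\NL$-transducer $M$, and let $G_{(M,x)}$ be its layered configuration graph with the exponential ``$2^i$ on the $i$-th edge'' weighting, as in the earlier proof; on this graph $M$ being min-unique means that for the true optimum output value $v^\ast=f(x)$ there is a unique minimum-weight source-to-sink path encoding it. First I would reduce $2^i$ modulo the first $l\le N^5$ primes $p_1,\dots,p_l$ with $\prod p_i>2^{N+1}t^2(N)$, exactly as before, so that for some prime $p_k$ the reduced weights give distinct weights to all the (polynomially many, by the min-unique promise restricted to optimal-weight paths) relevant paths; each $p_k$ has a logarithmic representation.

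The key difficulty is selecting the good prime. In the $\ReachFewL$ case one could recognize bad primes inside $\UL\cap\coUL$ because \emph{every} configuration had few paths into it; here the fewness is only promised for minimum-weight paths, so I cannot freely ask ``is there a path of length $i$'' as an unambiguous query. The fix is that the single quantity we actually need — the length/value output on the unique optimal path, and a check that a candidate prime keeps the optimal-weight path unique — can be phrased as a gap condition: for a given prime $p$ and candidate residual weight $w$, counting (via the standard transitive-closure / path-counting recurrence mod the right structure) the number of minimum-weight $s$-to-$t$ paths of weight $\equiv w$ is computable by an $\NL$ machine whose gap is $0$ or $1$ precisely when the min-unique promise is respected. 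So the main steps are: (i) build $G_{(M,x)}$ and set up the prime list; (ii) for each prime $p_1,p_2,\dots$ in turn, use an $\SPL$ oracle to test whether $p$ is ``good'' (the minimum $p$-weight from $s$ to $t$ is achieved on a unique path and that path carries the true optimum output value), stopping at the first good $p_k$ — each such test is a polynomial batch of $\SPL$ queries, and $\SPL$ is closed under the logspace-Turing operations we need here; (iii) with $p_k$ in hand, recover $f(x)=v^\ast$ by a logspace scan that, for each candidate output value $o$ in the $O(\log n)$-bit range and each weight $w$, asks the $\SPL$ oracle whether there is a unique minimum-$p_k$-weight path from $s$ to $t$ decoding to $o$, and outputs the smallest such $o$.

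The main obstacle I anticipate is making step (ii)/(iii) genuinely land in $\SPL$ rather than merely in $\oplus\L$ or $\mathsf{C_=L}$: one has to argue that, \emph{under the min-unique promise}, the relevant path-count is forced into $\{0,1\}$ on every instance the algorithm actually queries, so that the $\pm1$-gap machine is legitimate (an $\SPL$ machine must have gap in $\{0,1\}$ on \emph{all} inputs, so the queries have to be engineered — e.g.\ by incorporating the promise-checking into the same computation, or by appealing to the known fact from \cite{AllenderMatching99} that the Reinhardt--Allender double-counting construction already yields an $\SPL$ algorithm for reachability in min-unique graphs). Once reachability/uniqueness in min-unique weighted graphs is available as an $\SPL$ predicate, the outer logspace machine just drives it deterministically, giving $\UOptL[\log n]\subseteq\FL^{\SPL}[\log n]$, and hence any language logspace-reducible to $\UOptL[\log n]$ lies in $\SPL$ (using closure of $\SPL$ under logspace-Turing reductions).
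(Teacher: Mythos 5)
There is a genuine gap, and it starts with the very first step. You import the hashing machinery from the $\ReachFewL$ proof, but that machinery depends essentially on a \emph{polynomial bound on the number of paths} whose weights must be separated: the product $\prod_{i,j}(w_i-w_j)$ is bounded by $2^{N+1}t^2(N)$ only because there are at most $t(N)$ paths, which is what lets the first $N^5$ primes suffice. A min-unique $\NL$-transducer for a $\UOptL[\log n]$ function carries no fewness promise whatsoever --- it may have exponentially many accepting computation paths; the promise is only that the \emph{minimum output value} is produced on a unique path. Your parenthetical ``polynomially many, by the min-unique promise restricted to optimal-weight paths, relevant paths'' is exactly one path (the optimal one), and the paths you actually need to separate from it are unboundedly many, so a ``good'' prime need not exist among the first polynomially many. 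The second unresolved point, which you flag yourself, is making every oracle query a legitimate $\SPL$ instance (gap in $\{0,1\}$ on \emph{all} inputs); deferring this to ``engineering'' or to a cited fact about min-unique reachability does not close it, especially since the graphs you produce are not known to be min-unique in the first place.

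The paper's proof needs none of this. Since $f(x)$ has $O(\log n)$ bits, it suffices to put $L=\{(x,i)\mid f(x)=i\}$ in $\SPL$ and have the $\FL$ machine query all polynomially many candidate values $i$. Let $N$ simulate $M$ on $x$ and accept iff the output is $\le i$, and let $g(x,i)=\acc_N(x,i)$. The min-unique promise gives, for every input, $g(x,i)=0$ when $i<f(x)$, $g(x,i)=1$ when $i=f(x)$, and $g(x,i)\ge 1$ when $i>f(x)$. The $\GapL$ function $h(x,j)=g(x,j)\prod_{i=1}^{j-1}(1-g(x,i))$ then telescopes to $1$ exactly when $f(x)=j$ and to $0$ otherwise, so its gap is in $\{0,1\}$ on all inputs and $L\in\SPL$. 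The uniqueness you were trying to manufacture with primes is already handed to you by the definition of $\UOptL$; the only idea needed is the standard product trick for converting ``first index where the count becomes nonzero, and there it equals one'' into a $\{0,1\}$ gap.
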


\begin{proof}
Let $f\in \UOptL[\log n]$ and let $M$ be the min-unique $\NL$-transducer that witnesses that $f\in \UOptL[\log n]$ and let $p$ be the polynomial bounding the value of $f$. Consider the following language $L$:

$$ L = \{(x,i)\mid f(x) = i \mbox{ and } i\leq p(|x|)\}. $$ 

We will show that $L\in \SPL$. Then in order to compute $f$ a logspace machine will ask polynomially many queries $(x,i)$ for $1\leq i\leq p(n)$. 
 
Consider the following machine $N$ which behaves as follows: $N$ on input $x$ and $i \leq p(n)$, simulates $M$ on input $x$ and accepts if and only if $M$ halts with an output $\leq i$. Let $g(x,i)$ counts the number of accepting paths of $N$ on input $(x,i)$. Notice that for $i < f(x)$, $g(x,i) = 0$, for $i = f(x)$ then $g(x,i) = 1$, and for $i> f(x)$, $g(x,i)\geq 1$. 

Now consider the $\GapL$ function $h(x,j) = g(x,j)\Pi_{i=1}^{j-1}(1-g(x,i))$. It follows that $h(x,j) = 1 $ exactly when $f(x) = i$. For the rest of $i$, $h(x,j)=0$.  Thus $L\in \SPL$. 
  r
\end{proof}

\comment{
The proof is given in the Appendix. 
}

\begin{corollary}
If $\NL \subseteq \L^{\UOptL[\log n]}$ then $\NL \subseteq  \SPL$.
\end{corollary}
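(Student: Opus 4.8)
The plan is to chain Theorem~\ref{uoptl} with the standard fact that $\SPL$ is closed under logspace Turing reductions, i.e. $\L^{\SPL}=\SPL$. Assume $\NL\subseteq \L^{\UOptL[\log n]}$. The first step is to eliminate the $\UOptL[\log n]$ oracle in favour of an $\SPL$ oracle. By Theorem~\ref{uoptl}, every $f\in\UOptL[\log n]$ lies in $\FL^{\SPL}[\log n]$; concretely, $f(x)$ is produced by a deterministic logspace transducer that queries the $\SPL$ language $L=\{(x,i): f(x)=i\le p(|x|)\}$ constructed in that proof. Hence a logspace machine with oracle access to $f$ can be simulated by a logspace machine with oracle access to $L$: whenever the original machine needs (a bit of) $f(y)$, the simulator runs the $\FL^{\SPL}$ procedure for $f$ on input $y$, passing that procedure's queries through to the $\SPL$ oracle. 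Since $f$ has only $O(\log n)$ output bits, the value of an $f$-query fits in logarithmic workspace, so the simulation stays in logspace. This gives $\L^{\UOptL[\log n]}\subseteq \L^{\SPL}$, and together with the hypothesis, $\NL\subseteq \L^{\SPL}$.

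The second step is to invoke the closure of $\SPL$ under logspace Turing reductions to conclude $\L^{\SPL}=\SPL$, and therefore $\NL\subseteq\SPL$. This closure is standard and follows because a $\{0,1\}$-valued $\GapL$ oracle can be folded into a single $\GapL$ computation: if a deterministic logspace machine queries an $\SPL$ set whose characteristic function is $\gap_{N}$ for some $\NL$-machine $N$, one assembles one machine whose $\gap$ is again $\{0,1\}$-valued and equals the machine's overall accept/reject outcome. (Alternatively one just cites this known property of $\SPL$.)

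The only real obstacle is the bookkeeping in the first step — verifying that substituting the $\FL^{\SPL}[\log n]$ procedure for each $\UOptL[\log n]$-oracle query genuinely keeps the whole computation in logspace, and that the resulting (possibly iterated) $\SPL$-oracle access collapses to a single $\SPL$ oracle. Both points rely precisely on the $O(\log n)$-bit output restriction built into $\UOptL[\log n]$ and $\FL^{\SPL}[\log n]$, which lets oracle answers be stored, together with the closure $\L^{\SPL}=\SPL$ used again to flatten nested oracles. Once this is in place the corollary is immediate. It is worth restating it contrapositively: if $\SPL\neq\NL$, then $\UOptL[\log n]$ is not hard for $\NL$ under logspace Turing reductions, which is the intended contrast with the identity $\OptL[\log n]=\FL^{\NL}[\log n]$.
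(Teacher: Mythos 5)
Your proof is correct and follows exactly the route the paper intends: the corollary is stated as an immediate consequence of Theorem~\ref{uoptl} combined with the known closure of $\SPL$ under logspace Turing reductions ($\L^{\SPL}=\SPL$, from \cite{AllenderMatching99}), which is precisely the chain $\NL\subseteq\L^{\UOptL[\log n]}\subseteq\L^{\SPL}=\SPL$ you give. Your added bookkeeping about the $O(\log n)$-bit oracle answers is a reasonable elaboration of the same argument.
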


An interesting question is whether $\FewL$ reduces to $\UOptL$. We are not able to show this, but we show that the class $\LogFew$ reduces to $\UOptL$. 

\begin{theorem}\label{logfew}
$\LogFew \leq \UOptL[\log n]$ (under metric reductions)
\end{theorem}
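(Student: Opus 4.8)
The plan is to reduce, for an arbitrary $L\in\LogFew$ with witnessing weakly unambiguous machine $M$ and logspace predicate $R$, the computation of $\acc_M(x)$ to a $\UOptL[\log n]$ function $g$, and then let the post-processing map $h_1$ of the metric reduction apply $R$ (with $h_2$ the identity). The key starting observation is that, since $M$ is weakly unambiguous, every accepting configuration of $M(x)$ is reached by at most one computation path, so $\acc_M(x)$ equals the number of distinct reachable accepting configurations of $M(x)$; in particular $\acc_M(x)\le p(|x|)$, where $p$ is the polynomial bounding the number of configurations of $M(x)$.

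The core of the argument is a min-unique $\NL$-transducer $T$ computing $g(x):=p(|x|)-\acc_M(x)$ as its optimum value. On input $x$, $T$ first nondeterministically guesses a number $j\in\{0,1,\dots,p(|x|)\}$, and then tries to certify that $M(x)$ has at least $j$ reachable accepting configurations: it guesses configurations $c_1<c_2<\cdots<c_j$ in strictly increasing lexicographic order (enforcing the ordering with $O(\log n)$-bit bookkeeping, keeping only $c_{i-1}$ between iterations), and for each $c_i$ it checks \emph{locally} that $c_i$ is an accepting configuration and guesses a computation path of $M(x)$ from the start configuration to $c_i$. If all these guesses succeed, $T$ writes $p(|x|)-j$ on its output tape and accepts; otherwise it rejects, producing an invalid output. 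Standard clocking/padding of $T$'s configuration makes it non-repeating, and every value it can output lies in $\{0,\dots,p(|x|)\}$, hence has $O(\log n)$ bits.

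Next I would verify min-uniqueness. Writing $k:=\acc_M(x)$, the largest $j$ for which the certification can succeed is $j=k$ (there are exactly $k$ reachable accepting configurations and the guessed ones must be distinct), so $\mbox{\it opt}_T(x)=p(|x|)-k=g(x)$; the $j=0$ branch always yields a valid output, so the optimum is never $\infty$. For uniqueness, any accepting path of $T$ outputting $p(|x|)-k$ must have chosen $j=k$, and then $c_1<\cdots<c_k$ is \emph{forced} to be exactly the sorted list of all reachable accepting configurations of $M(x)$; finally, by weak unambiguity, the path guessed from the start configuration to each $c_i$ is unique. Hence exactly one computation path of $T$ attains the minimum (the case $k=0$ is handled identically, the unique such path being the $j=0$ branch). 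Thus $g\in\UOptL[\log n]$, and taking $h_2(x)=x$ and $h_1(x,v)=1$ iff $R(x,\,p(|x|)-v)$ holds gives a logspace metric reduction from $L$ to $g$.

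The step I expect to be the main obstacle — and, I believe, the reason the statement is for $\LogFew$ rather than $\FewL$ — is exactly this uniqueness argument. It rests on two things: that the \emph{set} of all reachable accepting configurations is uniquely determined (so the certified tuple $c_1<\cdots<c_k$ is forced), and, crucially, that weak unambiguity forces a unique $s$-to-$c_i$ path. For a mere few machine an accepting configuration could be reachable by several paths, and then the minimizing computation of $T$ would no longer be unique, so this construction does not, as it stands, yield $\FewL\leq\UOptL[\log n]$.
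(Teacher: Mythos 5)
Your proof is correct and follows essentially the same approach as the paper's: guess a count $j$, certify $j$ accepting computations of $M$ by listing their distinct accepting configurations in increasing lexicographic order, and use weak unambiguity to force the optimal branch to be unique. Your step of outputting $p(|x|)-j$ rather than $j$ is in fact a needed refinement, since the paper's transducer as written makes the \emph{maximum} output unique while $\UOptL$ is defined via a unique \emph{minimum}.
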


\comment{

The proof is given in the Appendix. 
}

\begin{proof}
Let $L$ be a language in $\LogFew$. Let $M$ be a weakly unambiguous
machine that decided $L$. Consider the $\NL$-transducer $N$ that on
input $x$, computes the number of accepting paths of $M(x)$: $N(x)$
guess a $l$ so that $1\leq l\leq p(n)$ (where $p$ is the polynomial
bounding the number of accepting configurations) and then guess $l$
distinct accepting paths in lexicographically increasing accepting
configurations and accepts and outputs $l$ if all of them
accepts. Clearly $N$ outputs ${\it acc}_{M}(x)$ on exactly one
computation path and all other paths that accepts will have output $<
{\it acc}_{M}(x)$.
\end{proof}

\comment{

\begin{theorem}
\begin{enumerate}
\item
{\UL} is closed under $\mathrm{AND}$. 
\item
If {\UL} is closed under complement then {\UL} is closed under $\mathrm{OR}$.
\end{enumerate}
\end{theorem}

\begin{theorem}
\begin{enumerate}
\item
{\uncongested} is closed under $\mathrm{OR}$. 
\item
If {\uncongested} is closed under complement then {\uncongested} is closed under $\mathrm{AND}$.
\end{enumerate}
\end{theorem}
}

\comment{
We end this section with two results on the complexity of $st$-min-uniqueness. Min-uniqueness with respect to the start node is sufficient for showing membership of reachability in  $\UL$.   
In contrast, we observe that for showing membership of reachability in 
$\SPL$, it is sufficient that the class of graphs is $st$-min-unique where 
$s$ is the start node and $t$ is the destination node. 

\begin{theorem}
There is a nondeterministic logspace machine $M$ which on input $(G,s,t)$ has the following behavior: if $G$ is $st$-min-unique, then {\it gap}$_M(G,s,t)\in \{0,1\}$ and the {\it gap} is 1 if and only if there is a path from $s$ to $t$. 
\end{theorem}
\begin{proof}
Let $N$ be the nondeterministic logspace machine that on input $(G,s,t,i)$ guesses a path of length $\leq i$ and accepts if it is an $s$ to $t$ path. Then consider the following \GapL\ function $f(G,s,t) = 1 - \Pi_{1}^n(1-{\it acc}_{N}(G,s,t,i))$. Clearly, if there are no paths from $s$ to $t$ then the function $f$ is 0. Now if $G$ is $st$-min-unique with length of the minimum length path from $s$ to $t$ is $k$, then the $k^{th}$ term in the product evaluates to 0 and the function evaluates to 1. Thus the nondeterministic logspace machine $M$ producing this gap satisfies the requirement. 
\end{proof}

\comment{
We will be interested in a weaker notion of min-uniqueness.

\begin{definition}
Let $G$ be a directed graph with positive edge associated with each
edge. Let $s$ be a vertex of $G$. Then we say that $G$ is {\em weakly
  min-unique} with respect to $s$ if for any vertex $v$ that is
reachable from $s$, the minimum weight path from $s$ to $v$ is unique.
\end{definition}

Thus a weighted graph is min-unique if and only if it is weakly
min-unique with respect to all nodes. } Theorem \ref{RA} implies that
checking whether a graph is min-unique with respect to a node $s$ is
in $\UL\cap\coUL$. A natural question is what is the complexity of
checking whether a graph is $st$-min-unique with respect to some $s$
and $t$? We show that this problem is $\NL$-complete.

\begin{theorem}

Given an instance $(G,s,t)$, checking whether $G$ is $st$-min-unique is $\NL$-complete. 
\end{theorem}
\begin{proof}

We will reduce the canonical $\coNL$-complete problem, {\sc Non-Reachability} to $A$. Let $\langle G,s,t\rangle$ be an instance of  {\sc Non-Reachability}. Let $n$ be the number of nodes in $G$. Then if $t$ is reachable from $s$, then there is an $s$ to $t$ path of length $\leq n$. The reduction takes $n$ copies of $G$ and connect them in series. Let $G_i$ be the $i^{th}$ copy and $s_i$ and $t_i$ be the corresponding start and terminal vertices. 
Connect $t_i$ to $s_{i+1}$ by a directed edge. Moreover, in the $i^{th}$ copy we also add a path $p_i$ of length $i$ from $s_i$ to $t_i$ (as shown in Figure~\ref{fig:hardness}). Let $G'$ be this graph. Then the reduction maps $G,s,t$ to the instance $G',s_1,t_n$. 

Suppose there is no path from $s$ to $t$ in $G$. Then in $G'$ there is a unique path from $s_1$ to $t_n$ and $G'$ is min-unique with respect to $s_1t_n$-useful nodes. Hence it is an YES instance of $A$. Suppose there is a path from $s$ to $t$ in $G$. Let $l$ be the length of the shortest path from $s$ to $t$. The claim is that the node $t_l$ in $G_l$ is not min-unique with respect to $s_1$. There will be two paths of minimum length $(l-1) +\sum_{i=1}^{l} i$: one   that takes the ``outer'' path and the one which takes the outer path until $G_{l-1}$ and then the $s_l-t_l$ path in $G_l$. Thus $G'$ is not min-unique with respect to useful nodes. Therefore $G',s_1,t_n$ is not in $A$.  

\input{hardness}

\end{proof}
}
%\begin{enumerate}
%\item Show FewL$\cap$coFewL is in $\UL$.
%\item Computing the number if accepting paths is ReachFewL in \UL.
%\item $\NL\subseteq \UL/{\mathrm LIN}$
%\item 
%\end{enumerate}

\section{Three pages are sufficient for $\NL$}

We show that the reachability problem for directed graphs embedded on
3 pages is complete for $\NL$. It can be shown that the reachability
problem for graphs on 2 pages is equivalent to reachability in grid
graphs and hence is in $\UL$ by the result of \cite{BTV09}. Thus in
order to show that $\NL = \UL$ it is sufficient to extend the
techniques of \cite{BTV09} to graphs on 3 pages. It is also
interesting to note that graphs embedded on 1 page are outer-planar and
hence reachability for directed graphs on 1 page is complete for $\L$
\cite{AllenderEtAl06}.

\begin{definition} 
{\page} is the class of all graphs $G$, that can be embedded on $3$
pages as follows: all vertices of $G$ lie along the spine and the
edges lie on exactly one of the two pages without
intersection. Moreover all edges are directed from top to bottom.
{\pageR} is the language consisting of tuples $(G,s,t)$, such that $G
\in \page$, $s$ and $t$ are two vertices in $G$ and there exists a
path from $s$ to $t$ in $G$.
\end{definition}

\begin{theorem}\label{theorem:3page}
{\pageR} is complete for {\NL}.
\end{theorem}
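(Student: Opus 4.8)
The plan is to prove the two directions separately. Membership in $\NL$ is immediate: an instance of $\pageR$ is presented together with its $3$-page layout (a linear order of the vertices along the spine plus an assignment of each edge to one of the three pages), so checking that the layout is well-formed is in $\L$, and deciding $st$-reachability in the underlying digraph is just the standard $\NL$ algorithm, ignoring the page structure. All the work is in the hardness direction, and I would give a logspace many-one reduction from the canonical $\NL$-complete problem, directed $st$-connectivity, to $\pageR$.

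For the reduction, I would first pass to a layered form of the input. Given a digraph $G$ on $n$ vertices with source $s$ and sink $t$, build the time-expanded DAG $G^{*}$ on vertex set $V(G)\times\{0,1,\dots,n\}$ with an edge from $(u,i)$ to $(v,i{+}1)$ whenever $uv\in E(G)$ or $u=v$; then $(s,0)$ reaches $(t,n)$ in $G^{*}$ iff $t$ is reachable from $s$ in $G$, and $G^{*}$ is a DAG all of whose edges run between consecutive layers. I would place the layers $L_{0},L_{1},\dots,L_{n}$ along the spine in that order (so every edge is automatically monotone, i.e.\ directed from an earlier spine position to a later one) and replace the arbitrary bipartite connection pattern between each pair of consecutive layers by a sparse \emph{routing gadget}: a small DAG built purely by index arithmetic that (i) has a path from the $i$-th vertex of $L_{k}$ to the $j$-th vertex of $L_{k+1}$ exactly when $G^{*}$ has the corresponding edge, and (ii) can be drawn on three pages with all edges monotone. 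The natural candidate is a crossbar-style gadget: a grid of routing vertices in which each top vertex owns a vertical wire running downward, each bottom vertex owns a horizontal wire, and a short ``turn'' edge joins the two wires at position $(i,j)$ precisely when edge $(i,j)$ is present, so that the only way to reach a bottom vertex is to descend some vertical wire, turn at a present edge, and follow the corresponding horizontal wire. Stacking these gadgets — consecutive gadgets share one layer, laid out consistently — produces the final graph together with its $3$-page layout, and the whole construction is evidently logspace-computable and reachability-preserving.

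The hard part will be the embedding step: showing that the routing gadget genuinely fits on three pages (and no more) with all edges monotone, and that this survives the stacking. One cannot draw an arbitrary bipartite pattern between two ordered layers with few pages — $K_{n,n}$ already has book thickness $\Omega(n)$ — so subdividing the edges into wires is unavoidable, and moreover a naive amount of subdivision does not help enough, so the gadget must spread each connection over a long routing path. The point of the crossbar is that its edges split into three classes — vertical-wire edges, horizontal-wire edges, and local turn edges — and I would choose the spine order so that each class is internally non-crossing: the turn edges are arranged to join spine-consecutive vertices (hence trivially non-crossing), and a boustrophedon-style ordering of the grid rows is used so that wire edges nest rather than cross. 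Verifying that exactly this three-way split works, choosing the within-layer orders so that the three page classes remain non-crossing when $n$ gadgets are stacked, and confirming that the resulting gadget is genuinely non-planar — as it must be, since a $2$-page graph is planar and $2$-page reachability is in $\UL$ by \cite{BTV09}, so the third page has to be doing real work — is the crux of the argument; the correctness of the gadget and the logspace bound on the reduction are then routine bookkeeping.
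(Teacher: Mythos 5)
There is a genuine gap: the entire content of the theorem beyond routine facts is the $3$-page embedding, and that is precisely the step you defer. You write that ``verifying that exactly this three-way split works, choosing the within-layer orders so that the three page classes remain non-crossing when $n$ gadgets are stacked \ldots\ is the crux of the argument'' --- but the crux is never carried out. A crossbar gadget realizing an \emph{arbitrary} bipartite pattern between two layers of $n$ vertices packs up to $n^{2}$ turn edges plus $2n$ wires of length $n$ into a single block, and the claim that one fixed spine order makes the vertical wires, the horizontal wires, and the turn edges each internally non-crossing (while keeping every edge monotone and keeping the gadget compatible with the next gadget sharing a layer) is exactly the kind of claim that needs an explicit order and an explicit case check; as stated it is an assertion, not a proof. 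Separately, your remark that one must confirm the gadget is ``genuinely non-planar'' is a red herring: \NL-hardness follows from correctness of the reduction, not from the produced instances looking hard, so nothing of the sort needs to be verified.

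For comparison, the paper avoids the crossbar entirely by being much more profligate with copies and much stingier with page-$3$ edges. It fixes a topological order of the DAG, orders the $m$ edges $e_{1},\dots,e_{m}$ lexicographically by (head, tail), and builds $2m$ full copies of the vertex set laid out in a boustrophedon fashion along the spine; pages $1$ and $2$ alternate to carry the ``wire'' edges $(v_i^{j},v_i^{j+1})$ (which nest rather than cross because consecutive copy-layers are reversed), and page $3$ carries \emph{exactly one} edge per block, namely $(v_{k_1}^{2k-1},v_{k_2}^{2k})$ for the $k$-th edge --- so non-crossing on page $3$ is immediate, with no gadget analysis at all. The lexicographic edge order guarantees that the page-$3$ edges used by any $s$--$t$ path appear in increasing block order, which gives correctness. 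If you want to salvage your crossbar route you must actually exhibit the spine order and prove the three non-crossing claims; otherwise the simpler one-edge-per-block construction is the way to close the argument.
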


\begin{proof}
Assume that we are given a topologically sorted DAG $G$, with $(u_1,
u_2, \ldots , u_n)$ being the topological ordering of the vertices of
$G$. We want to decide if there is a path in $G$ from $u_1$ to
$u_n$. We define an ordering on the edges of $G$, say
$\mathcal{E}(G)$. Given two edges $e_1$ and $e_2$, (i) if head of
$e_1$ precedes head of $e_2$, then $e_1$ precedes $e_2$ in the
ordering, (ii) if head of $e_1$ is the same as the head of $e_2$, then
$e_1$ precedes $e_2$ in the ordering if tail of $e_1$ precedes tail of
$e_2$. It is easy to see that $\mathcal{E}(G)$ can be constructed in
logspace given $G$ and in any path from $s$ to $t$, if edge $e_1$
precedes $e_2$, then $e_1$ precedes $e_2$ in $\mathcal{E}(G)$ as
well. Let $m$ be the number edges in $G$.
 
We create $2m$ copies of each vertex in $G$ and let $v_i^j$ denote the
$j$th copy of the vertex $u_i$, for $i \in [n]$ and $j \in [2m]$. We
order the vertices along the spine of $H$ from top to bottom as
follows: \\ $(v_1^1, v_2^1, \ldots, v_n^1, v_n^2, v_{n-1}^2, \ldots ,
v_1^2 ,v_1^3, v_2^3, \ldots, v_n^3, \ldots, v_n^{2m}, \ldots,
v_1^{2m}).$

Next we need to connect all the $2m$ vertices corresponding to each
$u_i$ from the top to bottom. We use the first $2$ pages to do
that. Put the edge $(v_i^j , v_i^{j+1})$ in $H$, for each $i \in [n]$
and each $j \in [2m -1]$, using page $1$ when $j$ is odd and page $2$
when $j$ is even. For the $k$th edge in $\mathcal{E}(G)$, say $e_k =
(u_{k_1}, u_{k_2})$, put the edge $(v_{k_1}^{2k-1} , v_{k_2}^{2k})$ in
$H$, using page $3$. It is clear that this can be done without any two
edges crossing each other. We give an example of this reduction in
Figure \ref{fig:3page_example}. The claim is, there exists a path from
$u_1$ to $u_n$ in $G$ if and only if there exists a path from $v_1^1$
to $v_n^{2m}$ in $H$.

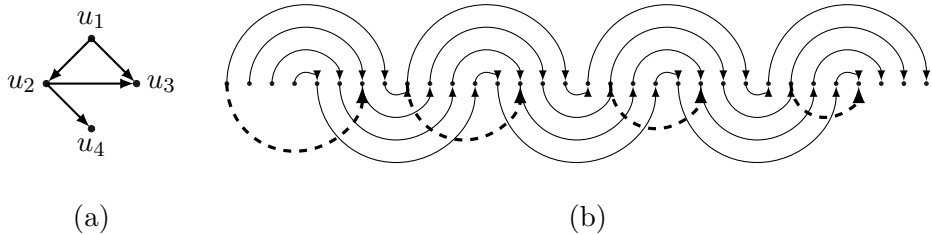
\begin{figure}[h]
\centering

\begin{tikzpicture}[scale=.6,shorten >=.35mm,>=latex]

%graph 1
\draw [fill, color=black!90] (0,0) circle(0.7 mm);
\draw [fill, color=black!90] (2,0) circle(0.7 mm);
\draw [fill, color=black!90] (1,1) circle(0.7 mm);
\draw [fill, color=black!90] (1,-1) circle(0.7 mm);

\draw [thick][->] (1,1) -- (0,0);
\draw [thick][->] (1,1) -- (2,0);
\draw [thick][->] (0,0) -- (2,0);
\draw [thick][->] (0,0) -- (1,-1);

\node [above] at (1,1) {$u_1$};
\node [left] at (0,0) {$u_2$};
\node [right] at (2,0) {$u_3$};
\node [below] at (1,-1) {$u_4$};

\node at (1,-3) {(a)};

%graph 2
\foreach \z in {0}
{
\foreach \y in {4,...,19}
{
\draw [fill, color=black!90] (\y,\z) circle(0.4 mm);
\draw [fill, color=black!90] (\y+ .5,\z) circle(0.4 mm);
}

\foreach \x in {1.75,1.25,0.75,0.25}
{
	\foreach \y in {4,8,12,16}
	{
	\draw [->] (\y + 1.75- \x,\z) arc (180:0:\x);
	}
}
\foreach \x in {1.75,1.25,0.75,0.25}
{
	\foreach \y in {6,10,14}
	{
	\draw [->] (\y + 1.75- \x,\z) arc (180:360:\x);
	}
}

\draw [dashed, very thick, ->] (4,\z) arc (180:360:1.5);
\draw [dashed, very thick, ->] (8,\z) arc (180:360:1.25);
\draw [dashed, very thick, ->] (12.5,\z) arc (180:360:1);
\draw [dashed, very thick, ->] (16.5,\z) arc (180:360:.75);
}
\node at (12,-3) {(b)};

\end{tikzpicture}
\caption{(a) Graph $G$. (b) The corresponding graph $H$. The dashed edges of $H$ are on page $3$.}
\label{fig:3page_example}
\end{figure}

Suppose there exists a path $p$ from $u_1$ to $u_n$ in $G$. Let $p =
(e_{i_1}, \ldots e_{i_l})$. For each $j \in [l]$, corresponding to
$e_{i_j}$ there exists an edge in page $3$ of $H$ by construction, say
$f_{j}$. Also by construction and the ordering $\mathcal{E}(G)$, the
tail of $f_j$ lies above the head of $f_{j+1}$ along the spine of
$H$. Further, since the head of $e_{i_{j+1}}$ is the same as the tail
of $e_{i_j}$ for $j \in [l-1]$, there exists a path from the tail of
$f_j$ to the head of $f_{j+1}$ (using edges from pages $1$ and
$2$). Thus we get a path from $v_1^1$ to $v_n^{2m}$ in $H$.

To see the other direction, let $\rho$ be a path from $v_1^1$ to
$v_n^{2m}$ in $H$. Let $\rho_3 = (\alpha_{1}, \alpha_{2}, \ldots ,
\alpha_{r} )$ be the sequence of edges of $\rho$ that lie on page
$3$. Note that each of the edges in $\rho_3$ has a unique pre-image in
$G$ by the property of the reduction. This defines a sequence of edges
$p'$ in $G$ by taking the respective pre-images of the edges in
$\rho_3$. Now the sub-path of $\rho$ from the $v_1^1$ to the head of
$\alpha_{1}$ uses only edges from page $1$ and $2$ and thus by
construction the head of $\alpha_{1}$ is a vertex $v_1^{l_1}$ (for
some $l_1 \in [2m]$). Similar argument establishes that the tail of
$\alpha_{r}$ is a vertex $v_n^{l_2}$ (for some $l_2 \in [2m]$) and
also that the tail of $\alpha_{i}$ and the head of $\alpha_{i+1}$ are
the copies of the same vertex in $G$, for $i \in [r-1]$. Therefore
$p'$ is a path from $u_1$ to $u_n$ in $G$.
\end{proof}

\section{Acknowledgments}

We thank Eric Allender for an interesting email discussion and
providing valuable suggestions that improved the presentation of the
paper.  We thank V. Arvind for interesting email exchanges on the
topic of this paper. The last author deeply thanks Meena Mahajan and
Thanh Minh Hoang for discussions on a related topic during a recent
Dagstuhl workshop. We thank Samir Datta and Raghav Kulkarni for
discussions which lead to a weaker version of Theorem
\ref{theorem:3page} (namely, reachability for 4-page graphs is
complete for \NL).

\comment{ Theorem \ref{theorem:3page} is obtained in discussion with
  Samir Datta and Raghav Kulkarni and we thank them for letting us
  include it in this paper.  }

\bibliographystyle{alpha}
\bibliography{GGR,data}

\end{document}